\title{Subsumptions of Algebraic Rewrite Rules}
\author{Thierry~Boy~de~la~Tour
\institute{Univ. Grenoble Alpes, CNRS, Grenoble INP, LIG, 38000 Grenoble, France}
\email{thierry dot boy-de-la-tour at imag.fr}
}
\theoremstyle{plain}
\newtheorem{theorem}{Theorem}[section]
\newtheorem{lemma}[theorem]{Lemma}
\newtheorem{proposition}[theorem]{Proposition}
\newtheorem{corollary}[theorem]{Corollary}
\theoremstyle{definition}
\newtheorem{definition}[theorem]{Definition}
\newtheorem{example}[theorem]{Example}
\newcommand{\tuple}[1]{(#1)}
\newcommand{\invf}[1]{#1^{-1}}
\newcommand{\Gc}{\mathrm{G}}
\newcommand{\Lc}{\mathrm{L}}
\newcommand{\Dc}{\mathrm{D}}
\newcommand{\Kc}{\mathrm{K}}
\newcommand{\Rc}{\mathrm{R}}
\newcommand{\fc}{\mathrm{f}}
\newcommand{\kc}{\mathrm{k}}
\newcommand{\rc}{\mathrm{r}}
\newcommand{\lc}{\mathrm{l}}
\newcommand{\mc}{\mathrm{m}}
\newcommand{\projd}{\pi_1}
\newcommand{\projk}{\pi_2}
\newcommand{\projr}{\pi_3}
\newcommand{\rulecatDPO}{\mathscr{R}_{\mathrm{DPO}}}
\newcommand{\rulecatDPOm}{\mathscr{R}_{\mathrm{DPOm}}}
\newcommand{\rulecatSqPO}{\mathscr{R}_{\mathrm{SqPO}}}
\newcommand{\rulecatSqPOm}{\mathscr{R}_{\mathrm{SqPOm}}}
\newcommand{\rulecatPBPO}{\mathscr{R}_{\mathrm{PBPO}}}
\newcommand{\directcatDPO}{\mathscr{D}_{\mathrm{DPO}}}
\newcommand{\directcatSqPO}{\mathscr{D}_{\mathrm{SqPO}}}
\newcommand{\directcatSqPOm}{\mathscr{D}_{\mathrm{SqPOm}}}
\newcommand{\directcatDPOm}{\mathscr{D}_{\mathrm{DPOm}}}
\newcommand{\directcatPBPO}{\mathscr{D}_{\mathrm{PBPO}}}
\newcommand{\emptygr}{\varnothing}
\newcommand{\DPORF}{\mathsf{R}_{\mathrm{DPO}}}
\newcommand{\DPOPF}{\mathsf{P}_{\mathrm{DPO}}}
\newcommand{\DPOmRF}{\mathsf{R}_{\mathrm{DPOm}}}
\newcommand{\SqPORF}{\mathsf{R}_{\mathrm{SqPO}}}
\newcommand{\SqPOmRF}{\mathsf{R}_{\mathrm{SqPOm}}}
\newcommand{\DPOmPF}{\mathsf{P}_{\mathrm{DPOm}}}
\newcommand{\PBPORF}{\mathsf{R}_{\mathrm{PBPO}}}
\newcommand{\id}[1]{1_{#1}}
\newcommand{\dual}[1]{#1^{\mathrm{op}}}
\newcommand{\arule}{\rho}
\newcommand{\rulem}{\sigma}
\newcommand{\aFunc}{\mathsf{F}}
\newcommand{\discr}[1]{|#1|}
\newcommand{\tm}[1]{#1!}
\newcommand{\cat}{\mathcal{C}}
\newcommand{\directcat}{\mathcal{D}}
\newcommand{\directobj}{\delta}
\newcommand{\directmorph}{\mu}
\newcommand{\rulecat}{\mathcal{R}}
\newcommand{\partialcat}{\cat_{\mathrm{pt}}}
\newcommand{\ptobj}{\tau}
\newcommand{\ptmorph}{\nu}
\newcommand{\inputF}{\mathsf{In}}
\newcommand{\ruleF}{\mathsf{R}}
\newcommand{\partialF}{\mathsf{P}}
\newcommand{\insertF}{\mathsf{I}}
\newcommand{\insertGF}{\mathsf{I}_{G}}
\newcommand{\insertSF}{\mathsf{I}_{\rulesys}}
\newcommand{\rulesys}{\mathcal{S}}
\newcommand{\termcat}{\mathbf{1}}
\newcommand{\directcatG}{\directcat|_{G}}
\newcommand{\directcatSG}{\directcat|_{G}^{\rulesys}}
\newcommand{\ruleFSG}{\ruleF|_{G}^{\rulesys}}
\newcommand{\stratcat}{\mathrm{\Delta}}
\newcommand{\ProjG}{\mathsf{P}^{\leftarrow}_{\stratcat}}
\newcommand{\ProjH}{\mathsf{P}^{\rightarrow}_{\stratcat}}
\newcommand{\slice}[1]{\cat\setminus #1}
\newcommand{\coslice}[1]{#1\setminus \cat}
\newcommand{\interf}{\mathrm{C}_{\stratcat}}
\newcommand{\cone}[1]{\gamma_{#1}}
\newcommand{\intercone}{\gamma_{\stratcat}}
\newcommand{\pbporulecat}{\mathscr{D}_{\mathrm{PBPO}}}
\newcommand{\egr}{\raisebox{-0.9ex}{
  \begin{tikzpicture}[xscale=0.8]
    \coordinate (A) at (0,0); \coordinate (B) at (1,0);
    \draw (A) node {$\bullet$}; \draw (B) node {$\bullet$};
    \path[-Stealth] (A) edge (B);
  \end{tikzpicture}}}
\newcommand{\dvgr}{\raisebox{-0.9ex}{
  \begin{tikzpicture}[xscale=0.8]
    \coordinate (A) at (0,0); \coordinate (B) at (1,0);
    \draw (A) node {$\bullet$}; \draw (B) node {$\bullet$};
  \end{tikzpicture}}}
\newcommand{\tvgr}{\raisebox{-0.9ex}{
  \begin{tikzpicture}[xscale=0.8]
    \coordinate (A) at (0,0); \coordinate (B) at (2,0); \coordinate (C) at (1,0);
    \draw (A) node {$\bullet$}; \draw (B) node {$\bullet$}; \draw (C) node {$\bullet$};
  \end{tikzpicture}}}
\newcommand{\degr}{\raisebox{-0.9ex}{
  \begin{tikzpicture}[xscale=0.8]
    \coordinate (A) at (0,0); \coordinate (B) at (1,0);
    \draw (A) node {$\bullet$}; \draw (B) node {$\bullet$};
    \path[-Stealth,bend right] (A) edge (B);
    \path[-Stealth,bend left] (A) edge (B);
  \end{tikzpicture}}}
\newcommand{\tvdegr}{\raisebox{-0.9ex}{
  \begin{tikzpicture}[xscale=0.8]
    \coordinate (A) at (0,0); \coordinate (B) at (2,0); \coordinate (C) at (1,0);
    \draw (A) node {$\bullet$}; \draw (B) node {$\bullet$}; \draw (C) node {$\bullet$};
    \path[-Stealth] (A) edge (C);
    \path[-Stealth] (C) edge (B);
  \end{tikzpicture}}}
\newcommand{\tvqegr}{\raisebox{-0.9ex}{
  \begin{tikzpicture}[xscale=0.8]
    \coordinate (A) at (0,0); \coordinate (B) at (2,0); \coordinate (C) at (1,0);
    \draw (A) node {$\bullet$}; \draw (B) node {$\bullet$}; \draw (C) node {$\bullet$};
    \path[-Stealth,bend left] (A) edge (C);
    \path[-Stealth,bend right] (A) edge (C);
    \path[-Stealth,bend left] (C) edge (B);
    \path[-Stealth,bend right] (C) edge (B);
  \end{tikzpicture}}}
\begin{document}
\maketitle
\begin{abstract}
  What does it mean for an algebraic rewrite rule to subsume another
  rule (that may then be called a subrule)? We view subsumptions as
  rule morphisms such that the simultaneous application of a rule and
  a subrule (i.e. the application of a subsumption morphism) yields the same
  result as a single application of the subsuming rule. Simultaneous
  applications of categories of rules are obtained by Global
  Coherent Transformations and illustrated on graphs in the DPO
  approach. Other approaches are possible since these transformations
  are formulated in an abstract Rewriting Environment, and such
  environments exist for various approaches to Algebraic Rewriting,
  including DPO, SqPO and PBPO.
\end{abstract}

\section{Introduction}

In Global Transformations \cite{MaignanS15} rules may be seen as pairs
$\tuple{L,R}$ of graphs (or objects in a category $\cat$) that are applied
simultaneously to an input graph (as in L-systems \cite{FernandezMS22}
and cellular automata \cite{FernandezMS21}). Such rules are related by
pairs of $\cat$-morphisms. These morphisms come from representing
possible overlaps of rules as subrules whose applications are induced
by the overlapping applications of rules, therefore establishing a
link between these. By computing a colimit of a diagram involving the
morphisms between occurrences of right-hand sides, Global
Transformations offer the possibility to merge items (vertices or
edges) in these occurrences of right-hand sides.

This form of rules has the advantage of simplicity, first because rule
morphisms are those of the product category $\cat\times\cat$, and
second because the input object is completely removed. Indeed, when all
occurrences of $L$ have been found in the input graph $G$, the output
graph $H$ is produced solely from the corresponding occurrences of
$R$, thus effectively removing $G$. In particular, if no $L$ has any
match in $G$ then $H$ is the empty graph. If $G$ is, say, a relational
database, this may be inconvenient.

More standard approaches to algebraic rewriting use rules for
\emph{replacing} matched parts of the input object by new parts. These
substitutions are performed by first removing the matched part and
then adding the new part, this last operation being performed by a
pushout. But since there is no general algebraic way of removing parts
of a $\cat$-object, several approaches have been devised, from DPO
\cite{Ehrig79} to PBPO \cite{CorradiniDEPR19} rules, for defining the
\emph{context} (a $\cat$-object) in which $R$ can be ``pushed''. These rules
always have an interface $K$ with a pair of $\cat$-morphisms from $K$
to $L$ and $R$ (a span), but can be more complicated. Hence the
necessity of a general notion of morphism between rules that does not
depend on a specific shape of rules.

In Section \ref{sec-DPO} an intuitive analysis of rule subsumptions on
a simple example with DPO-rules leads to a natural definition of
subsumption morphisms between DPO-rules, and of corresponding
subsumption morphisms between direct DPO-transformations. This leads
in Section~\ref{sec-RE} to a general notion of \emph{Rewriting
  Environment} that provides the relevant categories of rules and of
direct transformations, and functors between them and to a category of
\emph{partial transformations}.

Section~\ref{sec-GCT} is devoted to the Global Coherent
Transformation.  It derives from the Parallel Coherent Transformations
defined in \cite{BdlTE21a} (only for a variant of DPO-rules), where
sets or rules can be applied simultaneously on an input object. The
first step defines the \emph{global context} as a limit of a diagram
that involves the subsumption morphisms.

One important problem is that overlapping applications of rules (i.e.,
overlapping direct transformations) may conflict as one transformation
deletes an item of $G$ that another transformation
preserves. Note that conflicts cannot happen with Global
Transformations since they preserve nothing. Only non conflicting, so
called \emph{coherent} transformations can be applied simultaneously,
hence the notion of Parallel Coherence from \cite{BdlTE21a} must be
adapted in order to embrace subsumption morphisms. The adapted
definition ensures that the right-hand sides of the rules can be
pushed in the global context by means of a colimit.

Section \ref{sec-env} is devoted to the analysis of Rewriting
Environments, and yields natural definitions of environments for the
SqPO and PBPO approaches. Future work and open questions are found in
Section~\ref{sec-conclusion}.

\section{Notations}\label{sec-notations}

\emph{Embeddings} are injective functors, all
other notions are compatible with \cite{MacLane}. We also use
\emph{meets} and \emph{sums} of functors, see \cite{Lawvere93}.

For any category $\cat$, we write $G\in\cat$ to indicate that $G$ is a
$\cat$-object, and $\discr{\cat}$ is the discrete category on
$\cat$-objects. Then $G$ also denotes the functor from the terminal
category $\termcat$ to $\discr{\cat}$ that maps the object of
$\termcat$ to $G$. $\emptygr$ denotes the initial object of $\cat$, if
any. The \emph{slice} category $\slice{G}$ has as objects
$\cat$-morphisms of codomain $G$, and as morphisms $h:f\rightarrow g$
$\cat$-morphisms such that $g\circ h = f$. The \emph{coslice} category
$\coslice{G}$ has as objects $\cat$-morphisms of domain $G$, and as
morphisms $h:f\rightarrow g$ $\cat$-morphisms such that
$h\circ f = g$.

We will use the standard notion of graphs with multiple directed
edges.  
In the running example we will use graphs with 2 to 3 vertices and 0
to 4 edges denoted directly by their drawings, as in $\dvgr$ and
$\tvqegr$. In order to avoid naming vertices, they will always be
depicted from left to right, and we will use at most two monomorphisms
from one graph to another: one (depicted as a plain arrow) that maps
the leftmost (resp.  rightmost) vertex of the domain graph to the
leftmost (resp.  rightmost) vertex of the codomain graph, and one
(dotted arrow) that swaps these vertices. For example we consider only
two possible morphisms:
\begin{center}
    \begin{tikzpicture}[xscale=3]
    \node[draw,rounded corners](D) at (0,0) {$\dvgr$};
    \node[draw,rounded corners](C) at (1.15,0){$\tvqegr$};
    \path[->,bend left] (D) edge (C);
    \path[->,bend right, dotted] (D) edge (C);
  \end{tikzpicture}
\end{center}

The two morphisms from $\egr$ to $\degr$ will be distinguished
similarly:
\begin{center}
  \begin{tikzpicture}[xscale=3]
    \node[draw,rounded corners](D) at (0,0) {$\egr$};
    \node[draw,rounded corners](C) at (1,0){$\degr$};
    \path[->,bend left] (D) edge (C);
    \path[->,bend right,dash dot] (D) edge (C);
  \end{tikzpicture}
\end{center}

\section{Subrules in DPO Graph Transformations}\label{sec-DPO}

The notion of a rule $\arule$ being a subrule of a rule $\arule'$, or
more generally of a subsumption morphism $\rulem:\arule\rightarrow \arule'$,
covers the idea that $\arule$ represents a part (specified by
$\rulem$) of what $\arule'$ achieves, and therefore that any
application of $\arule'$ entails and subsumes a particular application
(obtained through $\rulem$) of $\arule$. We first try to make this
idea more precise with DPO-rules.

\begin{definition}[DPO rules and direct transformations, gluing condition]
\label{def-ruleDPO}
  A \emph{DPO-rule} $\arule$ in a category $\cat$ is a span diagram
  \begin{center}
    \begin{tikzpicture}[xscale=1.5]
      \node (L) at (1,1) {$L$};
      \node (K) at (2,1) {$K$};  \node (R) at (3,1) {$R$};  
      \path[>->] (K) edge node[above, font=\footnotesize] {$l$} (L);
      \path[->] (K) edge node[above, font=\footnotesize] {$r$} (R);
    \end{tikzpicture}
  \end{center}
  in $\cat$, where $l$ is monic. Diagrams in $\cat$ are functors from
  an index category to $\cat$, and it will sometime be convenient to
  refer to the objects and morphisms of this index category; they will
  be denoted by the corresponding roman letters (here $\arule\Lc = L$,
  $\arule\lc = l$, etc.)

  We say that an item (edge or vertex) of a graph $G$ is \emph{marked
    for removal} by a \emph{matching} $m:L \rightarrow G$ for a rule
  $\arule$ if it has a preimage by $m$ that has none by $l$ (see
  \cite{BdlT23a}). The \emph{gluing condition} for $m,\,\arule$ states
  that
  \[\left\{
    \begin{array}{lc}
      \text{all items marked for removal have only one preimage by } m,
      &\text{(GC1)}\\
      \text{if a vertex adjacent to an edge is marked for removal, then so
      is this edge.} & \text{(GC2)}
    \end{array}\right.\]

  A \emph{direct DPO-transformation} $\directobj$ in $\cat$ is a diagram 
  \begin{center}
    \begin{tikzpicture}[xscale=1.7,yscale=1.5]
      \node (D1) at (0,1) {$L$};
      \node (K1) at (1,1) {$K$};  \node (R1) at (2,1) {$R$};  
      \node (D2) at (0,0) {$G$};
      \node (K2) at (1,0) {$D$};  \node (R2) at (2,0) {$H$};  
      \path[>->] (K1) edge node[above, font=\footnotesize] {$l$} (D1);
      \path[->] (K1) edge node[above, font=\footnotesize] {$r$} (R1);
      \path[->] (K2) edge node[below, font=\footnotesize] {$f$} (D2);
      \path[->] (K2) edge node[below, font=\footnotesize] {$g$} (R2);
      \path[->] (D1) edge node[left, font=\footnotesize] {$m$} (D2);
      \path[->] (K1) edge node[right, font=\footnotesize] {$k$} (K2);
      \path[->] (R1) edge node[right, font=\footnotesize] {$n$} (R2);
      \draw (0.3,0.1) to (0.3,0.3) to (0.1,0.3);
      \draw (1.7,0.1) to (1.7,0.3) to (1.9,0.3);
    \end{tikzpicture}
  \end{center}
  in $\cat$ such that $l$ is monic and the two squares are pushouts.
\end{definition}

It is well known (see \cite{EhrigEPT06,HndBkCorradiniMREHL97}) that in
the category of graphs, given $\arule$ and $m:L\rightarrow G$, there
exists a direct DPO-transformation $\directobj$ with $\arule$ and $m$
iff the gluing condition holds. The \emph{pushout complement} $D$ is
then a subgraph of $G$ ($f$ is monic) and contains all the items of
$G$ that are not marked for removal.

\begin{example}\label{ex-running}
  In the running example we transform every directed edge in a graph
into a pair of consecutive edges. This can be expressed as the
following rule 
\begin{equation*}\tag{$\arule'$}
  \raisebox{-1ex}{\begin{tikzpicture}[xscale=2.5]
    \node[draw,rounded corners](L) at (0,0) {$\egr$};
    \node[draw,rounded corners](K) at (1,0){$\dvgr$};
    \node[draw,rounded corners](R) at (2.15,0){$\tvdegr$};
    \path[->](K) edge (L); \path[->](K) edge (R);
  \end{tikzpicture}}
\end{equation*}

We do not wish to transform loops in this way, hence we adopt the DPO
approach restricted to monic matchings. We also wish to create only one
middle vertex for parallel edges, so that the input graph $G=\degr$ in
our running example shall be transformed into $H=\tvqegr$. In order to
merge the two vertices created by the two simultaneous applications of $\arule'$ on
$G$ we need to link them through the application of a common subrule
on their overlap. Consider the rule
\begin{equation*}\tag{$\arule$}
  \raisebox{-1ex}{\begin{tikzpicture}[xscale=2.5]
    \node[draw,rounded corners](L) at (0,0) {$\dvgr$};
    \node[draw,rounded corners](K) at (1,0){$\dvgr$};
    \node[draw,rounded corners](R) at (2.15,0){$\tvgr$};
    \path[->](K) edge (L); \path[->](K) edge (R);
  \end{tikzpicture}}
\end{equation*}
The right hand side expresses the fact that the middle vertex is
created depending on the overlap $\dvgr$ and not on the edges of
$G$. Thus we need to link the middle vertices from $\arule$ and
$\arule'$ right-hand sides through a morphism
$\rulem^+:\arule\rightarrow \arule'$, given as three $\cat$-morphisms:
\begin{equation*}\tag{$\rulem^+$}
  \raisebox{-5ex}{\begin{tikzpicture}[xscale=2.5,yscale=1.5]
    \node[draw,rounded corners](L) at (0,0) {$\dvgr$};
    \node[draw,rounded corners](K) at (1,0){$\dvgr$};
    \node[draw,rounded corners](R) at (2.15,0){$\tvgr$};
    \path[->](K) edge (L); \path[->](K) edge (R);
    \node[draw,rounded corners](L') at (0,1) {$\egr$};
    \node[draw,rounded corners](K') at (1,1){$\dvgr$};
    \node[draw,rounded corners](R') at (2.15,1){$\tvdegr$};
    \path[->](K') edge (L'); \path[->](K') edge (R');
    \path[->](L) edge node[left, font=\footnotesize] {$\rulem^+_1$} (L'); 
    \path[->](K) edge node[right, font=\footnotesize] {$\rulem^+_2$} (K'); 
    \path[->](R) edge node[right, font=\footnotesize] {$\rulem^+_3$} (R'); 
  \end{tikzpicture}}
\end{equation*}
\end{example}

The two square diagrams commute, and we easily understand that this is
necessary for $\arule$ to be a subrule of $\arule'$. But commutation
would also hold if the interface graph of $\arule$ were $\emptygr$,
and then $\arule$ would remove the overlap $\dvgr$. This would
conflict with $\arule'$ that preserves this part of $G$. We need the two
rules to behave similarly on the overlap, which means that the
interface of the subrule $\arule$ is determined by the way the
interface of $\arule'$ intersects the overlap. This can be expressed
by stating that the left square should be a pullback.

\begin{definition}[categories $\rulecatDPO$,
  $\rulecatDPOm$]\label{def-catDPO}
  For any category $\cat$, let $\rulecatDPO$ be the category whose
  objects are the DPO-rules and morphisms (or \emph{subsumptions})
  $\rulem:\arule\rightarrow\arule'$ are triples
  $\rulem=\tuple{\rulem_1,\rulem_2,\rulem_3}$ of $\cat$-morphisms such
  that
  \begin{center}
    \begin{tikzpicture}[xscale=1.7,yscale=1.5]
      \node (D1) at (1,1) {$L$};
      \node (K1) at (2,1) {$K$};  \node (R1) at (3,1) {$R$};  
      \node (D2) at (1,0) {$L'$};
      \node (K2) at (2,0) {$K'$};  \node (R2) at (3,0) {$R'$};  
      \path[>->] (K1) edge node[above, font=\footnotesize] {$l$} (D1);
      \path[->] (K1) edge node[above, font=\footnotesize] {$r$} (R1);
      \path[>->] (K2) edge node[below, font=\footnotesize] {$l'$} (D2);
      \path[->] (K2) edge node[below, font=\footnotesize] {$r'$} (R2);
      \path[->] (D1) edge node[left, font=\footnotesize] {$\rulem_1$} (D2);
      \path[->] (K1) edge node[right, font=\footnotesize] {$\rulem_2$} (K2);
      \path[->] (R1) edge node[right, font=\footnotesize] {$\rulem_3$} (R2);
      \draw (1.7,0.9) to (1.7,0.7) to (1.9,0.7);
    \end{tikzpicture}
  \end{center}
  (where $L'=\arule'\Lc$ etc.) commutes in $\cat$ and the left square
  is a pullback. Composition is componentwise
  and the obvious identities are
  $\id{\arule}=\tuple{\id{L},\, \id{K},\, \id{R}}$ (this is a
  subcategory of $\cat^{\cdot\leftarrow\cdot\rightarrow\cdot}$).  Let $\rulecatDPOm$
  be the subcategory of $\rulecatDPO$ with all rules and all morphisms
  $\rulem$ such that $\rulem_1$ and $\rulem_2$ are monics.
\end{definition}

\begin{example}
  We consider two morphisms of rules, $\rulem^+$ above and
  $\rulem^-:\arule\rightarrow\arule'$ that swaps the left and right
  vertices:
\begin{equation*}\tag{$\rulem^-$}
  \raisebox{-5ex}{\begin{tikzpicture}[xscale=2.5,yscale=1.5]
    \node[draw,rounded corners](L) at (0,0) {$\dvgr$};
    \node[draw,rounded corners](K) at (1,0){$\dvgr$};
    \node[draw,rounded corners](R) at (2.15,0){$\tvgr$};
    \path[->](K) edge (L); \path[->](K) edge (R);
    \node[draw,rounded corners](L') at (0,1) {$\egr$};
    \node[draw,rounded corners](K') at (1,1){$\dvgr$};
    \node[draw,rounded corners](R') at (2.15,1){$\tvdegr$};
    \path[->](K') edge (L'); \path[->](K') edge (R');
    \path[->,dotted](L) edge node[left, font=\footnotesize] {$\rulem^-_1$} (L'); 
    \path[->,dotted](K) edge node[right, font=\footnotesize] {$\rulem^-_2$} (K'); 
    \path[->,dotted](R) edge node[right, font=\footnotesize] {$\rulem^-_3$} (R'); 
  \end{tikzpicture}}
\end{equation*}
\end{example}

We now see that the gluing condition is inherited (backward) along
the morphisms of $\rulecatDPOm$.

\begin{proposition}\label{prop-gc}
  If $\cat$ is the category of graphs,
  $\rulem:\arule\rightarrow\arule'$ is a morphism in $\rulecatDPO$
  such that $\rulem_1$ is monic and $m':L'\rightarrow G$ satisfies the
  gluing condition for $\arule'$ then so does $m'\circ \rulem_1:
  L\rightarrow G$ for $\arule$.
\end{proposition}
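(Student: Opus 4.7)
The plan is to derive each clause of the gluing condition for $m' \circ \rulem_1$ and $\arule$ by transfer to the corresponding clause for $m'$ and $\arule'$, using that the left square is a pullback as the bridge between $l(K)$ and $l'(K')$.

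I would first establish the key equivalence: for every element $y$ of $L$ (vertex or edge), $y \in l(K)$ iff $\rulem_1(y) \in l'(K')$. The forward direction is commutativity of the left square; the backward direction is the universal property of the pullback, since any $z' \in K'$ with $l'(z') = \rulem_1(y)$ produces, together with $y$, a unique $w \in K$ such that $l(w) = y$ and $\rulem_2(w) = z'$. As an immediate consequence, any $y \in L$ witnessing that $x = m'(\rulem_1(y)) \in G$ is marked for removal by $m' \circ \rulem_1$ for $\arule$ (meaning $y \notin l(K)$) yields the witness $\rulem_1(y) \in L'$ that $x$ is also marked for removal by $m'$ for $\arule'$.

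With this transfer in hand, GC1 is routine: if $y_1, y_2 \in L$ are both preimages of a marked item $x$ under $m' \circ \rulem_1$, then $\rulem_1(y_1)$ and $\rulem_1(y_2)$ are preimages of $x$ under $m'$; since $x$ is marked by $m'$ as well, GC1 for $(m', \arule')$ equates them, and monicity of $\rulem_1$ forces $y_1 = y_2$. For GC2 I would transport an incidence between a vertex and an edge of $L$ along $\rulem_1$ into $L'$ (graph morphisms preserve endpoints), invoke GC2 for $(m', \arule')$ at that level, and then descend the ``marked for removal'' conclusion back to $\arule$ via the pullback equivalence above applied to the edge.

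The main obstacle I anticipate lies in GC2, whose content depends on how ``a vertex adjacent to an edge'' is read: when the adjacency is present at the $L$-level the transport is immediate, but if the condition is phrased as a dangling requirement over $G$ then an extra argument is needed to exhibit a preimage in $L$ of the candidate edge, which one must pin down via uniqueness of preimages for marked vertices (GC1 for $\arule'$) together with monicity of $\rulem_1$.
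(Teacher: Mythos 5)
Your proposal is correct and follows essentially the same route as the paper's proof: the same pullback transfer (an item of $L$ has a preimage by $l$ iff its $\rulem_1$-image has one by $l'$) is used to show that anything marked for removal by $m'\circ\rulem_1$ for $\arule$ is already marked for removal by $m'$ for $\arule'$, after which (GC1) follows from (GC1) for $m',\arule'$ plus monicity of $\rulem_1$, and (GC2) by transporting an incidence along $\rulem_1$. Concerning the ambiguity you rightly flag in (GC2): the paper's own proof reads the adjacency at the level of $L$ (a vertex $v$ and an edge $e$ of $L$ with $v$ outside the image of $l$), i.e.\ exactly the case you call immediate, and supplies no extra argument for the dangling-over-$G$ reading --- nor could your sketched repair via (GC1) and monicity of $\rulem_1$ succeed in general, since an edge of $G$ incident to $m'(\rulem_1(v))$ may have a preimage in $L'$ yet none in $L$ (e.g.\ $\rulem_1$ embedding a single vertex into a vertex carrying a loop), so your proof is complete to exactly the same standard as the paper's.
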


\begin{example}\label{ex-pushouts}
  There are two obvious matchings $m'_1$ and $m'_2$ of $\arule'$ in
  $G$, and they induce two matchings of $\arule$ in $G$, say
  $m^+= m'_1 \circ \rulem^+_1= m'_2\circ \rulem^+_1$ and
  $m^-= m'_1 \circ \rulem^-_1= m'_2\circ \rulem^-_1$.  We see that
  $m'_1$ and $m'_2$ satisfy the gluing condition, hence they have a
  pushout complement by $l'$ and so do $m^+$ and $m^-$ by $l$. We
  therefore get two DPO-transformations of $G$ by $\arule$ (below
  left), one with $\tuple{m^+,k^+, n^+,f,g}$ and the other with
  $\tuple{m^-,k^-, n^-,f,g}$, and two DPO-transformations of $G$ by
  $\arule'$ (below right), one with $\tuple{m'_1,k', n',f'_1,g'}$ and
  the other with $\tuple{m'_2,k', n',f'_2,g'}$.

\begin{center}
  \hspace*{-2pt}\begin{tikzpicture}[xscale=2.5,yscale=1.5]
    \node[draw,rounded corners] (G1)at(3.3,0) {\degr}; 
    \node[draw,rounded corners] (L1)at(3.3,1) {\egr};
    \node[draw,rounded corners] (K1)at(4.3,1) {\dvgr}; 
    \node[draw,rounded corners] (D1)at(4.3,0) {\egr};
    \node[draw,rounded corners] (R1)at(5.45,1) {\tvdegr};
    \node[draw,rounded corners] (H1)at(5.45,0){\begin{tikzpicture}[xscale=0.8]
    \coordinate (A) at (0,0); \coordinate (B) at (2,0); \coordinate (C) at (1,0);
    \draw (A) node {$\bullet$}; \draw (B) node {$\bullet$}; \draw (C) node {$\bullet$};
    \path[-Stealth,bend left] (A) edge (C);
    \path[-Stealth,bend right=20] (A) edge (B);
    \path[-Stealth,bend left] (C) edge (B);
  \end{tikzpicture}};
    \path[->](K1) edge node[above, font=\footnotesize] {$$}(L1); 
    \path[->](K1) edge node[above, font=\footnotesize] {$$}(R1); 
    \path[->](K1) edge node[right, font=\footnotesize] {$k'$}(D1); 
    \path[->](R1) edge node[right, font=\footnotesize] {$n'$}(H1); 
    \path[->,bend right](L1) edge node[left, font=\footnotesize] {$m'_1$}(G1);
    \path[->,bend right](D1) edge node[above, font=\footnotesize] {$f'_1$}(G1);
    \path[->,bend left, dash dot](L1) edge node[right, font=\footnotesize] {$m'_2$}(G1);
    \path[->,bend left, dash dot](D1) edge node[below, font=\footnotesize] {$f'_2$}(G1);
    \path[->](D1) edge node[above, font=\footnotesize] {$g'$}(H1); 
    \node[draw,rounded corners] (G2)at(0,0) {\degr}; 
    \node[draw,rounded corners] (L2)at(0,1) {\dvgr};
    \node[draw,rounded corners] (K2)at(1,1) {\dvgr}; 
    \node[draw,rounded corners] (D2)at(1,0) {\degr};
    \node[draw,rounded corners] (R2)at(2.15,1) {\tvgr}; 
    \node[draw,rounded corners] (H2)at(2.15,0){\begin{tikzpicture}[xscale=0.8]
    \coordinate (A) at (0,0); \coordinate (B) at (2,0); \coordinate (C) at (1,0);
    \draw (A) node {$\bullet$}; \draw (B) node {$\bullet$}; \draw (C) node {$\bullet$};
    \path[-Stealth,bend left=20] (A) edge (B);
    \path[-Stealth,bend right=20] (A) edge (B);
  \end{tikzpicture}};
    \path[->](K2) edge node[above, font=\footnotesize] {$$}(L2);
    \path[->](K2) edge node[above, font=\footnotesize] {$$}(R2);
    \path[->](D2) edge node[above, font=\footnotesize] {$f$} (G2);
    \path[->](D2) edge node[above, font=\footnotesize] {$g$} (H2);
    \path[->,bend right](L2) edge node[left, font=\footnotesize] {$m^+$}(G2);
    \path[->,bend right](K2) edge node[left, font=\footnotesize] {$k^+$}(D2); 
    \path[->,bend right](R2) edge node[left, font=\footnotesize] {$n^+$}(H2); 
    \path[->,bend left,dotted](L2) edge node[right, font=\footnotesize] {$m^-$}(G2);
    \path[->,bend left,dotted](K2) edge node[right, font=\footnotesize] {$k^-$}(D2); 
    \path[->,bend left,dotted](R2) edge node[right, font=\footnotesize] {$n^-$}(H2); 
  \end{tikzpicture}
\end{center}
\end{example}

The following result reveals the relationship induced by morphisms
$\rulem:\arule\rightarrow\arule'$ on 
the corresponding direct DPO-transformations.

\begin{proposition}\label{prop-DPO}
  If $\cat$ is the category of graphs,
  $\rulem:\arule\rightarrow\arule'$ is a morphism in $\rulecatDPO$,
  $m':L'\rightarrow G$ and
  $m'\circ \rulem_1:L\rightarrow G$ have pushout complements as
  below, then there is a unique graph morphism $d$ such that
  \begin{center}
    \begin{tikzpicture}[xscale=1.9,yscale=1.6,z=-0.7cm]
      \node(TK) at (0,0,0){$G$}; \node(TK') at (0,0,1){$G$};
      \node(K) at (0,1,0){$L$};
      \node(K') at (0,1,1){$L'$};
      \node(R) at (1,1,0){$K$};
      \node(R') at (1,1,1){$K'$};
      \node(TR) at (1,0,0){$D$};
      \node(TR') at (1,0,1){$D'$};
      \path[-](TK) edge node[above, font=\footnotesize] {=}(TK');
      \path[->](K) edge node[above, font=\footnotesize] {$\rulem_1$}(K');
      \path[->](R) edge node[right, font=\footnotesize] {$\rulem_2$}(R');
      \path[<-<,dashed](TR) edge node[right, font=\footnotesize] {$d$}(TR');
      \path[<-<](K) edge node[above, font=\footnotesize] {$l$}(R);
      \path[->](K) edge node[right, near start, font=\footnotesize] {$$}(TK);
      \path[->](K') edge node[left, font=\footnotesize] {$m'$}(TK');
      \path[<-<](TK) edge node[above,near end, font=\footnotesize] {$f$}(TR);
      \path[<-<](TK') edge node[below, font=\footnotesize] {$f'$}(TR');
      \path[->](R) edge node[right, font=\footnotesize] {$k$}(TR);
      \path[-](K') edge[draw=white, line width=3pt] (R');
      \path[<-<](K') edge node[above, font=\footnotesize] {$l'$}(R');
      \path[-](R') edge[draw=white, line width=3pt] (TR');
      \path[->](R') edge node[left,near end, font=\footnotesize] {$k'$}(TR');
    \draw (0.7,1,0.1) to (0.7,1,0.3) to (0.9,1,0.3);
    \end{tikzpicture}
  \end{center}
  commutes.
\end{proposition}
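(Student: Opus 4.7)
The plan is to work concretely, using the standard fact that in the category of graphs a pushout complement of $L\xleftarrow{l}K$ along $m:L\to G$, when it exists, can be realized as the subgraph inclusion $f:D\hookrightarrow G$ whose image consists of all items of $G$ not marked for removal by $m$, and similarly for $f':D'\hookrightarrow G$. Under this realization, constructing $d:D'\to D$ will reduce to establishing the subgraph inclusion $D'\subseteq D$ inside $G$; in particular $f$ and $f'$ will both be monic, which will govern the rest of the argument.

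The crux of the proof --- and the one place the pullback property of the left subsumption square is genuinely used --- is to show that every item of $G$ marked for removal by $m=m'\circ \rulem_1$ is also marked for removal by $m'$. Suppose $y\in L$ has no preimage under $l$; I must show $\rulem_1(y)$ has no preimage under $l'$. If $l'(z)=\rulem_1(y)$ for some $z\in K'$, then $(y,z)$ is a cone over the cospan $L\xrightarrow{\rulem_1} L'\xleftarrow{l'}K'$, so it factors through the pullback $K$, yielding $w\in K$ with $l(w)=y$, a contradiction. Hence the image of $f'$ is contained in that of $f$, producing a canonical monic inclusion $d:D'\hookrightarrow D$ with $f\circ d=f'$ by construction (so the bottom face of the cube commutes).

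For the right face of the cube I would verify $d\circ k'\circ \rulem_2=k$ by post-composing with the monic $f$ and chasing:
\[
f\circ d\circ k'\circ \rulem_2 \;=\; f'\circ k'\circ \rulem_2 \;=\; m'\circ l'\circ \rulem_2 \;=\; m'\circ \rulem_1\circ l \;=\; m\circ l \;=\; f\circ k,
\]
using in turn the bottom face just established, the commutativity of the back pushout, the top subsumption square, the left face of the cube (i.e.\ $m=m'\circ\rulem_1$), and the commutativity of the front pushout. Since $f$ is monic this forces the desired equality. Uniqueness of $d$ then follows immediately, since any candidate must satisfy $f\circ d=f'$ and $f$ is monic.

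The main obstacle, such as it is, lies entirely in the pullback step: it is crucial that the left subsumption square is a pullback and not merely commutative, for only then does the cone $(y,z)$ factor through $K$ so as to rule out spurious preimages under $l'$. Everything else is routine diagram chasing, steered by the monicity of $f$ and $f'$ supplied by the pushout complement construction in graphs.
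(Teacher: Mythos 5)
Your proof is correct and follows essentially the same route as the paper's: both realize the pushout complements as the subgraphs of items not marked for removal, use the pullback property of the left subsumption square to show that items marked for removal by $m'\circ\rulem_1$ are also marked by $m'$ (hence $D'\subseteq D$), and then obtain commutation of the right face and uniqueness of $d$ from the monicity of $f$. The only cosmetic difference is that you argue the contrapositive at the level of preimages in $L$ rather than of items of $D'$, which changes nothing of substance.
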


The existence of
$d$ means that all items marked for removal by
$m'\circ\rulem_1$, i.e., removed by the subrule
$\arule$, are also removed by
$\arule'$. In Example~\ref{ex-pushouts} we have
$f=\id{G}$, hence with $m'=m'_i$ we get
$d=f'_i$. We also see that there are no morphisms between the results
of the transformations of $G$ by $\arule$ and
$\arule'$, in either direction. This is due to the fact that subrules
remove less, but also add less. 
Subsumptions of rules cannot be deduced from the
properties of the transformation functions (from $\discr{\cat}$ to
$\discr{\cat}$) they induce.

\begin{definition}[categories $\directcatDPO$, $\directcatDPOm$,
  functors $\DPORF$, $\DPOmRF$]\label{def-transfoDPO}
  Let $\directcatDPO$ be the category whose objects are direct
  DPO-transformations in a category $\cat$ and whose morphisms (or
  \emph{subsumptions})
  $\directmorph:\directobj\rightarrow \directobj'$ are 4-tuples
  $\tuple{\directmorph_1, \directmorph_2, \directmorph_3,
    \directmorph_4}$ of $\cat$-morphisms such that the following
  diagram
\begin{center}
  \begin{tikzpicture}[xscale=1.9,yscale=1.6,z=-0.7cm]
    \node(G) at (0,0,0){$G$}; \node(G') at (0,0,1){$G'$};
    \node(L) at (0,1,0){$L$};
    \node(L') at (0,1,1){$L'$};
    \node(K) at (1,1,0){$K$};
    \node(K') at (1,1,1){$K'$};
    \node(D) at (1,0,0){$D$};
    \node(D') at (1,0,1){$D'$};
    \node(R) at (2,1,0){$R$};
    \node(R') at (2,1,1){$R'$};
    \path[-](G) edge node[above, font=\footnotesize] {=}(G');
    \path[->](L) edge node[above, font=\footnotesize] {$\directmorph_1$}(L');
    \path[->](K) edge node[right, font=\footnotesize] {$\directmorph_2$}(K');
    \path[<-](D) edge node[right, font=\footnotesize] {$\directmorph_4$}(D');
    \path[<-<](L) edge node[above, font=\footnotesize] {$l$}(K);
    \path[->](L) edge node[right, near start, font=\footnotesize] {$m$}(G);
    \path[->](L') edge node[left, font=\footnotesize] {$m'$}(G');
    \path[<-](G) edge node[above,near end, font=\footnotesize] {$f$}(D);
    \path[<-](G') edge node[below, font=\footnotesize] {$f'$}(D');
    \path[->](K) edge node[near start,right, font=\footnotesize] {$k$}(D);
    \path[-](L') edge[draw=white, line width=3pt] (K');
    \path[<-<](L') edge node[above, font=\footnotesize] {$l'$}(K');
    \path[-](K') edge[draw=white, line width=3pt] (D');
    \path[->](K') edge node[left,near end, font=\footnotesize] {$k'$}(D');
    \path[->](K) edge node[above, font=\footnotesize] {$r$}(R);
    \path[->](R) edge node[right, font=\footnotesize] {$\directmorph_3$}(R');
    \path[-](K') edge[draw=white, line width=3pt] (R');
    \path[->](K') edge node[above, font=\footnotesize] {$r'$}(R');
    \draw (0.7,1,0.1) to (0.7,1,0.3) to (0.9,1,0.3);
  \end{tikzpicture}
\end{center}
commutes and the top left square is a pullback, with componentwise
composition (but due to the contravariance of $\directmorph_4$ we are
not in a functor category anymore). Let $\DPORF$ be the obvious
functor from $\directcatDPO$ to $\rulecatDPO$, i.e. such that
$(\DPORF \directobj)\Lc = \directobj\Lc$ etc. and
$\DPORF\directmorph = \tuple{\directmorph_1, \directmorph_2,
  \directmorph_3}$. Let $\directcatDPOm$ be the full subcategory of
$\directcatDPO$ whose objects are the direct transformations
$\directobj$ such that $\directobj\mc$ is monic, and let
$\DPOmRF: \directcatDPOm \rightarrow \rulecatDPOm$ be the
corresponding restriction of $\DPORF$.
\end{definition}

\section{Rewriting Environments}\label{sec-RE}

Given an input object $G$ and a category of rules, we are left  with the
problem of finding all relevant transformations of $G$ by these
rules. We cannot simply rely on the matchings of their left-hand sides
in $G$ (as in \cite{MaignanS15}) since they may not have pushout
complements, or they may have several non isomorphic ones. We will
therefore use the relevant direct transformations, albeit in an
abbreviated version that do not contain $L$, since we don't use
matchings, nor $H$ since they are not relevant to subsumption.

\begin{definition}[category $\partialcat$, functors $\inputF$, $\DPOmPF$]\label{def-partialcat}
  A \emph{partial transformation} $\ptobj$ in $\cat$ is a diagram
  \begin{center}
    \begin{tikzpicture}[xscale=1.5]
      \node (G) at (0,1) {$G$}; \node (D) at (1,1) {$D$};
      \node (K) at (2,1) {$K$};  \node (R) at (3,1) {$R$};  
      \path[->] (D) edge node[above, font=\footnotesize] {$f$} (G);
      \path[->] (K) edge node[above, font=\footnotesize] {$k$} (D);
      \path[->] (K) edge node[above, font=\footnotesize] {$r$} (R);
    \end{tikzpicture}
  \end{center}
  For any category $\cat$, let $\partialcat$ be the category whose
  objects are partial transformations and  morphisms
  $\ptmorph:\ptobj\rightarrow \ptobj'$ are triples
  $\tuple{\ptmorph_1, \ptmorph_2, \ptmorph_3}$ such that
  \begin{center}
    \begin{tikzpicture}[xscale=1.7,yscale=1.5]
      \node (G1) at (0,1) {$G$}; \node (D1) at (1,1) {$D$};
      \node (K1) at (2,1) {$K$};  \node (R1) at (3,1) {$R$};  
      \node (G2) at (0,0) {$G'$}; \node (D2) at (1,0) {$D'$};
      \node (K2) at (2,0) {$K'$};  \node (R2) at (3,0) {$R'$};  
      \path[->] (D1) edge node[above, font=\footnotesize] {$f$} (G1);
      \path[->] (K1) edge node[above, font=\footnotesize] {$k$} (D1);
      \path[->] (K1) edge node[above, font=\footnotesize] {$r$} (R1);
      \path[->] (D2) edge node[below, font=\footnotesize] {$f'$} (G2);
      \path[->] (K2) edge node[below, font=\footnotesize] {$k'$} (D2);
      \path[->] (K2) edge node[below, font=\footnotesize] {$r'$} (R2);
      \path[-] (G1) edge node[left, font=\footnotesize] {$=$} (G2);
      \path[->] (D2) edge node[left, font=\footnotesize] {$\ptmorph_1$} (D1);
      \path[->] (K1) edge node[left, font=\footnotesize] {$\ptmorph_2$} (K2);
      \path[->] (R1) edge node[right, font=\footnotesize] {$\ptmorph_3$} (R2);
    \end{tikzpicture}
  \end{center}
  commutes in $\cat$, with obvious composition and identities.

  Let $\inputF:\partialcat\rightarrow \discr{\cat}$ be the
  \emph{input functor} defined as $\inputF\ptobj = G$.
  Let $\DPOPF:\directcatDPO\rightarrow \partialcat$ and
  $\DPOmPF:\directcatDPOm\rightarrow \partialcat$ be the obvious
  functors (such that $(\DPOPF\directobj)\Gc = \directobj\Gc$ etc. and
  $\DPOPF\directmorph = \tuple{\directmorph_4, \directmorph_2, \directmorph_3}$).
\end{definition}

Using inverse images along $\DPOmPF$ and $\DPOmRF$ we can
easily focus on the direct transformations of concern (and the
morphisms between them), i.e., the transformations \emph{of} a graph
\emph{by} a rule. 

\begin{definition}[Rewriting Environments, rule systems, notations
  $\Dc_{\directobj}$, $\projd\directmorph$ \ldots]
  For any category $\cat$, a \emph{Rewriting Environment} for $\cat$
  consists of a category $\directcat$ of \emph{direct
    transformations}, a category $\rulecat$ of \emph{rules} and two
  functors
  \[\begin{tikzpicture}[xscale=1.5]
      \node(R) at (0,0){$\rulecat$}; \node(D) at (1,0){$\directcat$};
      \node(Cpt) at (2,0){$\partialcat$};
      \path[->] (D) edge node[above, font=\footnotesize] {$\ruleF$} (R);
      \path[->] (D) edge node[above, font=\footnotesize] {$\partialF$} (Cpt);
    \end{tikzpicture}\]

  A \emph{rule system} in a Rewriting Environment is a category
  $\rulesys$ with an embedding $\insertF:\rulesys\rightarrow \rulecat$
  (alternately, $\rulesys$ is a subcategory of $\rulecat$ and
  $\insertF$ is the inclusion functor).

  Given a rule system and an \emph{input} $\cat$-object $G$, we build the
  categories $\directcatG$, $\directcatSG$ and functors $\insertGF$,
  $\insertSF$, $\ruleFSG$ as meets of previous functors:
  \begin{center}
    \begin{tikzpicture}[xscale=1.7,yscale=1.5]
      \node(S) at (0,2){$\rulesys$}; \node(R) at (1,2){$\rulecat$};
      \node(D) at (1,1){$\directcat$}; \node(C) at (3,1){$\discr{\cat}$};
      \node(T) at (3,0){$\termcat$}; \node (P) at (2,1){$\partialcat$};
      \node(DG) at (1,0){$\directcatG$};
      \node(DSG) at (0,0){$\directcatSG$};
      \path[->] (D) edge node[left, font=\footnotesize] {$\ruleF$} (R);
      \path[>->] (S) edge node[above, font=\footnotesize] {$\insertF$} (R);
      \path[->] (D) edge node[above, font=\footnotesize] {$\partialF$} (P);
      \path[->] (P) edge node[above, font=\footnotesize] {$\inputF$} (C);
      \path[>->] (T) edge node[right, font=\footnotesize] {$G$} (C);
      \path[->] (DG) edge  (T);
      \path[>->] (DG) edge node[left, font=\footnotesize] {$\insertGF$} (D);
      \path[>->] (DSG) edge node[below, font=\footnotesize] {$\insertSF$} (DG);
      \path[->] (DSG) edge node[left, font=\footnotesize] {$\ruleFSG$} (S);
      \draw (0.3,0.1) to (0.3,0.3) to (0.1,0.3);
      \draw (1.3,0.1) to (1.3,0.3) to (1.1,0.3);
    \end{tikzpicture}
  \end{center}

  For any $\directobj\in\directcatSG$ we write $\Dc_{\directobj}$ for
  $(\partialF\insertGF\insertSF\directobj)\Dc$ and similarly
  $\fc_{\directobj}$ etc. For any
  $\directmorph:\directobj\rightarrow \directobj'$ in $\directcatSG$
  we write $\projd\directmorph$ for the first coordinate of
  $\partialF\insertGF\insertSF\directmorph$ and similarly
  $\projk\directmorph$, $\projr\directmorph$.
\end{definition}

\begin{example}
  For $\rulesys$ we take the subcategory
  \raisebox{-3.9ex}{\begin{tikzpicture}[xscale=1.5]
      \node (rho) at (0,0) {$\arule$}; \node (rho') at (1,0) {$\arule'$};
      \path[->,bend left] (rho) edge node[above, font=\footnotesize] {$\rulem^+$} (rho');
      \path[->,bend right] (rho) edge node[below, font=\footnotesize] {$\rulem^-$} (rho');
    \end{tikzpicture}} of $\rulecatDPO$. To the matchings $m'_1$ and
  $m'_2$ of $\arule'$ in $G$ correspond two\footnote{We consider
    transformations only up to isomorphisms, see Footnote~\ref{ft-2}.}
  transformations in $\directcatDPOm$ that will be denoted $\directobj'_1$ and
  $\directobj'_2$ (depicted on the right in
  Example~\ref{ex-pushouts}). To the matchings $ m^+$ and $m^-$ of
  $\arule$ in $G$ correspond another two transformations denoted
  $\directobj^+$ and $\directobj^-$ (on the left in
  Example~\ref{ex-pushouts}).  To each $i=1,2$ correspond one morphism
  $\directmorph^+_i: \directobj^+\rightarrow \directobj'_i$ such that
  $\DPOmRF \directmorph^+_i = \rulem^+$ and one morphism
  $\directmorph^-_i: \directobj^-\rightarrow \directobj'_i$ such that
  $\DPOmRF \directmorph_i^- = \rulem^-$.
  Thus $\directcatDPOm|_{G}^{\rulesys}$ is the following subcategory of $\directcatDPOm$. 
  \begin{center}
      \begin{tikzpicture}
    \node (T) at (0,1){$\directobj'_1$};
    \node (B) at (0,-1){$\directobj'_2$};
    \node (L) at (-1,0){$\directobj^+$};
    \node (R) at (1,0){$\directobj^-$};
    \path[->] (L) edge node[left, font=\footnotesize] {$\directmorph_1^+$} (T);
    \path[->] (R) edge node[right, font=\footnotesize] {\,$\directmorph_1^-$} (T);
    \path[->] (L) edge node[left, font=\footnotesize] {$\directmorph_2^+$} (B);
    \path[->] (R) edge node[right, font=\footnotesize] {$\directmorph_2^-$} (B);    
  \end{tikzpicture}
  \end{center}
\end{example}

\section{Global Coherent Transformations}\label{sec-GCT}
 
As stated above we will use the partial transformations that are
accessible from $\directcatSG$ through
$\partialF\circ \insertGF \circ \insertSF$ (a restriction of
$\partialF$). We first need to build a context between the input $G$
and the expected output $H$. In Parallel Coherent Transformation
\cite{BdlTE21a} the context is obtained as a limit of the morphisms
$\fc_{\directobj}:\Dc_{\directobj}\rightarrow G$ (that need not be
monics) for all $\directobj$ in a set $\stratcat$ of direct
transformations, hence of a diagram that is a sink to $G$ and thus
corresponds to a discrete diagram in $\slice{G}$. In Global Coherent
Transformations the \emph{global context} (denoted $\interf$ below) is
obtained similarly, but now $\stratcat$ is a category and the diagram
contains the morphisms
$\projd\directmorph: \fc_{\directobj'} \rightarrow \fc_{\directobj}$
for all $\directmorph: \directobj \rightarrow \directobj'$ in
$\stratcat$ (since
$\fc_{\directobj}\circ \projd\directmorph= \fc_{\directobj'}$).

\begin{definition}[functor $\ProjG$, limit
  $f_{\stratcat}:\interf\rightarrow G$, limit cone $\intercone$]
  For any subcategory $\stratcat$ of $\directcatSG$ let
  $\ProjG:\dual{\stratcat}\rightarrow \slice{G}$ be the contravariant
  functor that maps every $\directobj\in\stratcat$ to
  $\fc_{\directobj}: \Dc_{\directobj} \rightarrow G$ and every
  morphism $\directmorph$ of $\stratcat$ to
  $\projd\directmorph: \fc_{\directobj'} \rightarrow
  \fc_{\directobj}$.  Let $f_{\stratcat}:\interf\rightarrow G$ be the
  limit of $\ProjG$ and $\intercone$ be the limit cone from
  $f_{\stratcat}$ to $\ProjG$.
\end{definition}

Note that if $\stratcat$ is empty then the limit $f_{\stratcat}$ of
the empty diagram is the terminal object of $\slice{G}$, that is
$\id{G}$, hence $\interf=G$. 

\begin{example}
  Let $\stratcat=\directcatDPOm|_{G}^{\rulesys}$. The diagram on the
  left below corresponds to the functor $\ProjG$ together with the
  morphisms
  $\fc_{\directobj^{\pm}_i}: \Dc_{\directobj^{\pm}_i}\rightarrow G$
  (objects in $\slice{G}$). The limit of this diagram yields
  $\interf = \dvgr$ and the limit cone is represented on the right.

\begin{center}
  \begin{tikzpicture}[yscale=1.5,xscale=1.8]
    \node[draw,rounded corners] (T1) at (0,1){\egr};
    \node[draw,rounded corners] (B1) at (0,-1){\egr};
    \node[draw,rounded corners] (L1) at (-1,0){\degr};
    \node[draw,rounded corners] (R1) at (1,0){\degr};
    \node (C1) at (0,0){$G$};
    \node at (-0.8,1){$\Dc_{\directobj'_1}$}; 
    \node at (-0.8,-1){$\Dc_{\directobj'_2}$}; 
    \node at (-1.8,0){$\Dc_{\directobj^+}$}; 
    \node at (1.8,0){$\Dc_{\directobj^-}$}; 
    \path[<-] (L1) edge node[left, font=\footnotesize] {$\projd\directmorph_1^+$} (T1);
    \path[<-] (R1) edge node[right, font=\footnotesize] {$\projd\directmorph_1^-$} (T1);
    \path[<-,dash dot] (L1) edge node[left, font=\footnotesize] {$\projd\directmorph_2^+$} (B1);
    \path[<-,dash dot] (R1) edge node[right, font=\footnotesize] {$\projd\directmorph_2^-$} (B1);    
    \path[->](L1) edge (C1);
    \path[->](R1) edge (C1);
    \path[->](T1) edge (C1);
    \path[->,dash dot](B1) edge (C1);
    \node[draw,rounded corners] (T2) at (4.5,1){\egr};
    \node[draw,rounded corners] (B2) at (4.5,-1){\egr};
    \node[draw,rounded corners] (L2) at (3.5,0){\degr};
    \node[draw,rounded corners] (R2) at (5.5,0){\degr};
   \node (C2) at (4.5,0){$\interf$};
    \node at (4.5-0.8,1){$\Dc_{\directobj'_1}$}; 
    \node at (4.5-0.8,-1){$\Dc_{\directobj'_2}$}; 
    \node at (4.5-1.8,0){$\Dc_{\directobj^+}$}; 
    \node at (4.5+1.8,0){$\Dc_{\directobj^-}$}; 
   \path[<-](L2) edge (C2);
   \path[<-](R2) edge (C2);
   \path[<-](T2) edge (C2);
   \path[<-](B2) edge (C2);
  \end{tikzpicture}
\end{center}
\end{example}

We next need to check that the transformations in $\stratcat$ do not
conflict with each other, i.e., that for all $\directobj\in\stratcat$
the image of $\Kc_{\directobj}$ in $G$ is not only preserved by
$\directobj$ (in $\Dc_{\directobj}$) but also by all other
transformations $\directobj'\in\stratcat$. This is ensured by finding
(natural) cones from these $\Kc_{\directobj}$ to the
$\Dc_{\directobj'}$, which we shall formulate with $\ProjG$, hence in
$\slice{G}$.

\begin{definition}[coherent system of cones, morphisms
  $c_{\directobj}$, global coherence]
  A \emph{coherent system of cones for} $\stratcat$ is a set of cones
  $\cone{\directobj}$ from $\fc_{\directobj}\circ \kc_{\directobj}$ to
  $\ProjG$ such that $\cone{\directobj}\directobj = \kc_{\directobj}$
  for all $\directobj\in \stratcat$, and
  $\cone{\directobj} = \cone{\directobj'}\circ \projk\directmorph$ for
  all $\directmorph:\directobj\rightarrow\directobj'$ in $\stratcat$.
  $\stratcat$ is \emph{globally coherent} if there exists a coherent
  system of cones for $\stratcat$.  We then let
  $c_{\directobj}: \fc_{\directobj}\circ \kc_{\directobj} \rightarrow
  f_{\stratcat}$ be the unique morphism in $\slice{G}$ such that
  $\cone{\directobj} = \intercone \circ c_{\directobj}$.
\end{definition}

Note that if $\cone{\directobj'}$ is a cone from
$\fc_{\directobj'}\circ \kc_{\directobj'}$ to $\ProjG$ then
$\cone{\directobj'}\circ \projk\directmorph$ is a cone from
$\fc_{\directobj}\circ \kc_{\directobj}$ to $\ProjG$, hence global
coherence means that we should find cones for overlapping direct
transformations (say $\directobj'_1$ and $\directobj'_2$), with the
constraint that they should be compatible on their common
subtransformations
$\directobj'_1\leftarrow\directobj \rightarrow \directobj'_2$. If
$\rulesys$ and therefore $\stratcat$ are discrete, this amounts to
parallel coherence (that generalizes parallel independence in DPO, see
\cite{BdlTE21a}).

\begin{example}
  On our example the four graphs $\Kc_{\directobj^{\pm}_i}$ are equal to
$\dvgr$. It is easy to build the four cones from the four
morphisms from $\Kc_{\directobj'_i}$ to $\Dc_{\directobj'_i}$ depicted below,
by composing them with the $\projd\directmorph^{\pm}_i$ on the left
and the $\projk\directmorph^{\pm}_i$ on the right. On the right are
also depicted the morphisms $c_{\directobj^{\pm}_i}$.

\begin{center}
  \begin{tikzpicture}[scale=1.5]
    \node[draw,rounded corners] (T1) at (0,1){\egr};
    \node[draw,rounded corners] (B1) at (0,-1){\egr};
    \node[draw,rounded corners] (L1) at (-1,0){\degr};
    \node[draw,rounded corners] (R1) at (1,0){\degr};
   \node (C1) at (0,0){$G$};
    \node at (-0.8,1){$\Dc_{\directobj'_1}$}; 
    \node at (-0.8,-1){$\Dc_{\directobj'_2}$}; 
    \node at (-1.8,0){$\Dc_{\directobj^+}$}; 
    \node at (1.8,0){$\Dc_{\directobj^-}$}; 
    \node (T2) at (4.5,1){$\Kc_{\directobj'_1}$};
    \node (B2) at (4.5,-1){$\Kc_{\directobj'_2}$};
    \node (L2) at (3.5,0){$\Kc_{\directobj^+}$};
    \node (R2) at (5.5,0){$\Kc_{\directobj^-}$};
    \node (C2) at (4.5,0){$\interf$};
   \path[->,out=180,in=0,looseness=1.5](B2) edge (T1);
   \path[->,out=180, in=0,looseness=1.5](T2) edge[draw=white, line width = 3pt] (B1); 
   \path[->,out=180, in=0,looseness=1.5](T2) edge (B1); 
   \path[<-] (L1) edge node[left, font=\footnotesize] {$\projd\directmorph_1^+$} (T1);
    \path[<-] (R1) edge node[right, font=\footnotesize] {$\projd\directmorph_1^-$} (T1);
    \path[<-,dash dot] (L1) edge node[left, font=\footnotesize] {$\projd\directmorph_2^+$} (B1);
    \path[<-, dash dot] (R1) edge node[right, font=\footnotesize] {$\projd\directmorph_2^-$} (B1);    
    \path[->](L1) edge (C1);
    \path[->](R1) edge (C1);
    \path[->](T1) edge (C1);
    \path[->, dash dot](B1) edge (C1);
    \path[->] (L2) edge node[left, font=\footnotesize] {$\projk\directmorph_1^+$} (T2);
    \path[->,dotted] (R2) edge node[right, font=\footnotesize] {$\projk\directmorph_1^-$} (T2);
    \path[->] (L2) edge node[left, font=\footnotesize] {$\projk\directmorph_2^+$} (B2);
    \path[->,dotted] (R2) edge node[right, font=\footnotesize] {$\projk\directmorph_2^-$} (B2);    
    \path[->](L2) edge node[above, font=\footnotesize] {$c_{\directobj^+}$} (C2);
    \path[->,dotted](R2) edge node[below, font=\footnotesize] {$c_{\directobj^-}$} (C2);
    \path[->](T2) edge node[right, font=\footnotesize] {$c_{\directobj'_1}$} (C2);
    \path[->](B2) edge node[left, font=\footnotesize] {$c_{\directobj'_2}$} (C2);
    \path[->,out=170, in=10](T2) edge (T1.north east);
    \path[->,out=190,in=-10](B2) edge (B1.south east);
    \node at (3.2,1.05){$\cone{\directobj'_1}$}; 
    \node at (3.2,-1.03){$\cone{\directobj'_2}$}; 
  \end{tikzpicture}
\end{center}
The reader may check that
$\cone{\directobj'_1}\circ \projk\directmorph_1^+ =
\cone{\directobj'_2}\circ \projk\directmorph_2^+$ (this is
$\cone{\directobj^+}$) and
$\cone{\directobj'_1}\circ \projk\directmorph_1^- =
\cone{\directobj'_2}\circ \projk\directmorph_2^-$
($= \cone{\directobj^-}$).
\end{example}

The morphisms $c_{\directobj}$ specify where the right-hand sides
$\Rc_{\directobj}$ should be pushed in the global context.

\begin{definition}[morphisms $h_{\directobj}:\interf\rightarrow H_{\directobj}$]
  If $\stratcat$ is globally coherent for all $\directobj\in\stratcat$
  then $c_{\directobj}$ can be viewed as a $\cat$-morphism
  $c_{\directobj}: \Kc_{\directobj} \rightarrow \interf$, and we
  consider the following pushout in $\cat$.
  \begin{center}
    \begin{tikzpicture}[xscale=1.5,yscale=1.5]
      \node(K) at (0,1){$\Kc_{\directobj}$};
      \node(C) at (0,0){$\interf$}; \node(H) at (1,0){$H_{\directobj}$};
      \node(R) at (1,1){$\Rc_{\directobj}$};
      \path[->] (K) edge node[above, font=\footnotesize] {$\rc_{\directobj}$} (R);
      \path[->] (K) edge node[left, font=\footnotesize] {$c_{\directobj}$} (C);
      \path[->] (R) edge node[right, font=\footnotesize] {$n_{\directobj}$} (H);
      \path[->] (C) edge node[below, font=\footnotesize] {$h_{\directobj}$} (H);
      \draw (0.7,0.1) to (0.7,0.3) to (0.9,0.3);
    \end{tikzpicture}
  \end{center}
\end{definition}

\begin{example}
  On our example we get:
  \begin{center}
    \begin{tikzpicture}[xscale=2.5,yscale=1.5]
      \node(K1) at (0,1){$\Kc_{\directobj'_i}$};
      \node(C1) at (0,0){$\interf$};
      \node[draw,rounded corners](H1) at (1,0){$\tvdegr$};
      \node[draw,rounded corners](R1) at (1,1){$\tvdegr$};
      \path[->] (K1) edge node[above, font=\footnotesize] {$\rc_{\directobj'_i}$}  (R1);
      \path[->] (K1) edge node[left, font=\footnotesize] {$c_{\directobj'_i}$} (C1);
      \path[->] (R1) edge node[right, font=\footnotesize] {$n_{\directobj'_i}$} (H1);
      \path[->] (C1) edge node[below, font=\footnotesize] {$h_{\directobj'_i}$} (H1);
      \node(K2) at (2,1){$\Kc_{\directobj^+}$};
      \node(C2) at (2,0){$\interf$};
      \node[draw,rounded corners](H2) at (3,0){$\tvgr$};
      \node[draw,rounded corners](R2) at (3,1){$\tvgr$};
      \path[->] (K2) edge node[above, font=\footnotesize] {$\rc_{\directobj^+}$} (R2);
      \path[->] (K2) edge node[left, font=\footnotesize] {$c_{\directobj^+}$} (C2);
      \path[->] (R2) edge node[right, font=\footnotesize] {$n_{\directobj^+}$} (H2);
      \path[->] (C2) edge node[below, font=\footnotesize] {$h_{\directobj^+}$} (H2);
      \node(K3) at (4,1){$\Kc_{\directobj^-}$};
      \node(C3) at (4,0){$\interf$};
      \node[draw,rounded corners](H3) at (5,0){$\tvgr$};
      \node[draw,rounded corners](R3) at (5,1){$\tvgr$};
      \path[->] (K3) edge node[above, font=\footnotesize] {$\rc_{\directobj^-}$} (R3);
      \path[->,dotted] (K3) edge node[left, font=\footnotesize] {$c_{\directobj^-}$} (C3);
      \path[->,dotted] (R3) edge node[right, font=\footnotesize] {$n_{\directobj^-}$} (H3);
      \path[->] (C3) edge node[below, font=\footnotesize] {$h_{\directobj^-}$} (H3);
    \end{tikzpicture}
  \end{center}
\end{example}

Thanks to the coherent system of cones we can turn $h$ into a functor.

\begin{proposition}\label{pr-h}
  For every $\directmorph:\directobj\rightarrow \directobj'$ in
  $\stratcat$ there exists a unique $h_{\directmorph}$ such that
  \begin{center}
    \begin{tikzpicture}[xscale=2.5,yscale=0.8]
      \node(C) at (0,0){$\interf$};
      \node(H) at (1,1){$H_{\directobj}$};
      \node(H') at (1,-1){$H_{\directobj'}$};
      \node(R) at (2,1){$\Rc_{\directobj}$};
      \node(R') at (2,-1){$\Rc_{\directobj'}$};
      \path[->] (C) edge node[above, font=\footnotesize] {$h_{\directobj}$} (H);
      \path[->] (C) edge node[below, font=\footnotesize] {$h_{\directobj'}$} (H');
      \path[->] (R) edge node[above, font=\footnotesize] {$n_{\directobj}$} (H);
      \path[->] (R') edge node[below, font=\footnotesize] {$n_{\directobj'}$} (H');
      \path[->] (R) edge node[right, font=\footnotesize] {$\projr\directmorph$} (R');
      \path[->,dashed] (H) edge node[right, font=\footnotesize] {$h_{\directmorph}$} (H');
    \end{tikzpicture}
  \end{center}
  commutes.
\end{proposition}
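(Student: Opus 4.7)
The plan is to use the universal property of the pushout defining $H_{\directobj}$. To produce a morphism out of $H_{\directobj}$ into $H_{\directobj'}$, it suffices to give a cocone on the span $\interf \xleftarrow{c_{\directobj}} \Kc_{\directobj} \xrightarrow{\rc_{\directobj}} \Rc_{\directobj}$ with apex $H_{\directobj'}$. The natural candidates are $h_{\directobj'}: \interf \to H_{\directobj'}$ on the left leg and $n_{\directobj'} \circ \projr\directmorph : \Rc_{\directobj} \to H_{\directobj'}$ on the right leg.

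The main verification is that these two candidates agree after precomposition with $c_{\directobj}$ and $\rc_{\directobj}$ respectively. For this I would use two ingredients, each obtained by ``transporting'' the data along $\directmorph$. First, from the coherent system of cones condition $\cone{\directobj} = \cone{\directobj'} \circ \projk\directmorph$ and the defining equation $\cone{\directobj} = \intercone \circ c_{\directobj}$, uniqueness in the limit $\interf$ forces
\[c_{\directobj} \;=\; c_{\directobj'} \circ \projk\directmorph.\]
Second, since $\directmorph$ is a morphism in $\directcatSG$ and hence $\partialF\insertGF\insertSF\directmorph = \tuple{\projd\directmorph, \projk\directmorph, \projr\directmorph}$ is a morphism of partial transformations, the right-hand square of the $\partialcat$-morphism diagram yields
\[\projr\directmorph \circ \rc_{\directobj} \;=\; \rc_{\directobj'} \circ \projk\directmorph.\]
Combining these with the commutativity of the pushout defining $H_{\directobj'}$, namely $h_{\directobj'} \circ c_{\directobj'} = n_{\directobj'} \circ \rc_{\directobj'}$, one computes
\[h_{\directobj'} \circ c_{\directobj} \;=\; h_{\directobj'} \circ c_{\directobj'} \circ \projk\directmorph \;=\; n_{\directobj'} \circ \rc_{\directobj'} \circ \projk\directmorph \;=\; n_{\directobj'} \circ \projr\directmorph \circ \rc_{\directobj},\]
as required.

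The universal property of the $H_{\directobj}$-pushout then delivers a unique $h_{\directmorph}: H_{\directobj} \to H_{\directobj'}$ with $h_{\directmorph} \circ h_{\directobj} = h_{\directobj'}$ and $h_{\directmorph} \circ n_{\directobj} = n_{\directobj'} \circ \projr\directmorph$, which are precisely the two commutation equations encoded in the displayed diagram. Uniqueness is immediate from the same universal property.

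I expect the mildly tricky step to be the first equation $c_{\directobj} = c_{\directobj'} \circ \projk\directmorph$: it is the one place where one has to pass from an identity between cones (living in $\slice{G}$) to an identity between the induced morphisms into the limit $f_{\stratcat}$, and then reinterpret the latter as an equation between $\cat$-morphisms $\Kc_{\directobj} \to \interf$. Everything else is a direct application of the pushout universal property together with functoriality of $\partialF\insertGF\insertSF$.
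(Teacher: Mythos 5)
Your proof is correct and follows essentially the same route as the paper: you derive $c_{\directobj}=c_{\directobj'}\circ\projk\directmorph$ from the coherence of the cones and the uniqueness of factorization through the limit, chain this with the naturality square $\projr\directmorph\circ\rc_{\directobj}=\rc_{\directobj'}\circ\projk\directmorph$ and the pushout square for $H_{\directobj'}$ to obtain the cocone identity $h_{\directobj'}\circ c_{\directobj}=n_{\directobj'}\circ\projr\directmorph\circ\rc_{\directobj}$, and conclude by the universal property of the pushout defining $H_{\directobj}$. This is exactly the paper's cube argument, merely written out equationally.
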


\begin{corollary}
  By unicity we get $h_{\directmorph'\circ\directmorph} = h_{\directmorph'} \circ h_{\directmorph}$.
\end{corollary}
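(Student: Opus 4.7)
The plan is to invoke the uniqueness part of Proposition~\ref{pr-h}. Given composable morphisms $\directmorph: \directobj \rightarrow \directobj'$ and $\directmorph': \directobj' \rightarrow \directobj''$ in $\stratcat$, I will show that the composite $h_{\directmorph'} \circ h_{\directmorph}: H_{\directobj} \rightarrow H_{\directobj''}$ satisfies the two equations that uniquely characterize $h_{\directmorph' \circ \directmorph}$. Uniqueness itself comes from the universal property of the pushout defining $H_{\directobj}$: any morphism out of $H_{\directobj}$ is determined by its precompositions with $h_{\directobj}$ and with $n_{\directobj}$.

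First I would verify the triangle by pasting the two instances of Proposition~\ref{pr-h}:
\[
(h_{\directmorph'} \circ h_{\directmorph}) \circ h_{\directobj} \;=\; h_{\directmorph'} \circ h_{\directobj'} \;=\; h_{\directobj''}.
\]
Then I would verify the square, again by pasting and then using functoriality of $\projr$:
\[
(h_{\directmorph'} \circ h_{\directmorph}) \circ n_{\directobj} \;=\; h_{\directmorph'} \circ n_{\directobj'} \circ \projr\directmorph \;=\; n_{\directobj''} \circ \projr\directmorph' \circ \projr\directmorph \;=\; n_{\directobj''} \circ \projr(\directmorph' \circ \directmorph).
\]
These are precisely the two defining equations for $h_{\directmorph' \circ \directmorph}$ in Proposition~\ref{pr-h}, so uniqueness yields the desired equality.

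There is no real obstacle: the argument is a short diagram chase closed off by the universal property of a pushout. The only ingredient beyond Proposition~\ref{pr-h} is the functoriality of $\projr$, which is immediate since $\projr\directmorph$ is by definition the third coordinate of $\partialF \insertGF \insertSF \directmorph$ and composition in $\partialcat$ is componentwise.
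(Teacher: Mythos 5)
Your argument is exactly the one the paper intends: the corollary is stated as an immediate consequence of the uniqueness in Proposition~\ref{pr-h}, and you correctly verify that $h_{\directmorph'}\circ h_{\directmorph}$ satisfies the two defining equations of $h_{\directmorph'\circ\directmorph}$ (the triangle over $\interf$ and the square with $n_{\directobj}$, using functoriality of $\projr$). Correct, and essentially identical to the paper's reasoning.
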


\begin{example}
  For instance the morphisms $\directmorph_i^- : \directobj^-
\rightarrow \directobj'_i$ yield the morphisms $h_{\directmorph_i^-}$
depicted below.

  \begin{center}
    \begin{tikzpicture}[xscale=3.5,yscale=0.8]
      \node(C1) at (0,0){$\interf$};
      \node[draw,rounded corners](H) at (1,1){\tvgr};
      \node[draw,rounded corners](H') at (1,-1){\tvdegr};
      \node[draw,rounded corners](R) at (2,1){\tvgr};
      \node[draw,rounded corners](R') at (2,-1){\tvdegr};
      \path[->] (C1) edge node[above, font=\footnotesize] {$h_{\directobj^-}$} (H);
      \path[->] (C1) edge node[below, font=\footnotesize] {$h_{\directobj'_i}$} (H');
      \path[->,dotted] (R) edge node[above, font=\footnotesize] {$n_{\directobj^-}$} (H);
      \path[->] (R') edge node[below, font=\footnotesize] {$n_{\directobj'_i}$} (H');
      \path[->,dotted] (R) edge node[right, font=\footnotesize] {$\projr\directmorph^-_i$} (R');
      \path[->] (H) edge node[right, font=\footnotesize] {$h_{\directmorph^-_i}$} (H');
    \end{tikzpicture}
  \end{center}
\end{example}

The final step of the Global Coherent Transformation, symmetric to
the first step, consists in taking the colimit in the coslice category
$\coslice{\interf}$ of the covariant diagram of index $\stratcat$ with
objects $h_{\directobj}$ and morphisms
$h_{\directmorph}: h_{\directobj}\rightarrow h_{\directobj'}$ for all
$\directmorph: \directobj\rightarrow \directobj'$ in $\stratcat$.

\begin{definition}[functor $\ProjH$, colimit $h_{\stratcat}:\interf\rightarrow H_{\stratcat}$]
  If $\stratcat$ is globally coherent let
  $\ProjH:\stratcat\rightarrow \coslice{\interf}$ be the functor
  defined by $\ProjH \directobj=h_{\directobj}$ (interpreted as an
  object of $\coslice{\interf}$) and
  $\ProjH \directmorph=h_{\directmorph}$ for all
  $\directmorph:\directobj\rightarrow \directobj'$ in $\stratcat$. Let
  $h_{\stratcat}:\interf\rightarrow H_{\stratcat}$ be the
  colimit\footnote{\label{ft-2} Global Coherent Transformations are obtained as
    limits and colimits of diagrams whose index category is
    $\stratcat$, hence are not affected by isomorphisms in $\stratcat$,
    which can therefore be replaced by its skeleton.} of
  $\ProjH$, then the $\cat$-span
  $G\xleftarrow{f_{\stratcat}} \interf \xrightarrow{h_{\stratcat}}
  H_{\stratcat}$ is a \emph{Global Coherent Transformation by
    $\stratcat$}.
\end{definition}

If $\stratcat$ is empty then the colimit $h_{\stratcat}$ of the empty
diagram is the initial object of $\coslice{\interf}$, that is
$\id{\interf}$, hence $H_{\stratcat} = \interf = G$. Generally, the
functor $\ProjH$ depends on the choice of cones $\cone{\directobj}$
for $\directobj\in\stratcat$, hence $h_{\stratcat}$ is not determined
by $\stratcat$.

\begin{example}\label{ex-result}
  The functor $\ProjH$ applied to $\stratcat$ yields the following
  diagram

\begin{center}
  \begin{tikzpicture}[xscale=2,yscale=1.8]
    \node[draw,rounded corners] (T2) at (3.5,1){\tvdegr};
    \node[draw,rounded corners] (B2) at (3.5,-1){\tvdegr};
    \node[draw,rounded corners] (L2) at (2.5,0){\tvgr};
    \node[draw,rounded corners] (R2) at (4.5,0){\tvgr};
    \node (C2) at (3.5,0){$\interf$};
    \node at (2.7,1){$H_{\directobj'_1}$}; 
    \node at (2.7,-1){$H_{\directobj'_2}$}; 
    \node at (1.7,0){$H_{\directobj^+}$}; 
    \node at (5.3,0){$H_{\directobj^-}$}; 

    \path[->] (L2) edge node[left, font=\footnotesize] {$h_{\directmorph_1^+}$} (T2);
    \path[->] (R2) edge node[right, font=\footnotesize] {$h_{\directmorph_1^-}$} (T2);
    \path[->] (L2) edge node[left, font=\footnotesize] {$h_{\directmorph_2^+}$} (B2);
    \path[->] (R2) edge node[right, font=\footnotesize] {$\,\,h_{\directmorph_2^-}$} (B2);    
    \path[<-](L2) edge (C2);
    \path[<-](R2) edge (C2);
    \path[<-](T2) edge (C2);
    \path[<-](B2) edge (C2);
  \end{tikzpicture}
\end{center}
The leftmost vertices of these five graphs are connected as images or
preimages of each other, and similarly for the five right vertices,
and the four middle vertices. The four edges are not likewise
connected, hence the colimit of this diagram is the expected result
$H = \tvqegr$. We therefore see that the two middle vertices created
in $\directobj'_1$ and $\directobj'_2$ are merged by their common
subtransformation $\directobj^+$ (or $\directobj^-$), but also that
the two middle vertices created in $\directobj^+$ and $\directobj^-$
are merged by their common subsuming transformation $\directobj'_1$ (or
$\directobj'_2$).

If  we apply $\rulesys$ to the graph $G'= \dvgr$ then
rule $\arule'$ does not apply to $G'$ and hence the two matchings of
$\arule$ in $G'$ apply independently, thus adding two vertices to
$G'$. We can merge them by adding to $\rulesys$ the following rule
morphism $\rulem:\arule\rightarrow\arule$ that swaps the left and
right vertices:

\begin{center}
  \begin{tikzpicture}[xscale=2.5,yscale=1.5]
    \node[draw,rounded corners](L) at (0,0) {$\dvgr$};
    \node[draw,rounded corners](K) at (1,0){$\dvgr$};
    \node[draw,rounded corners](R) at (2.15,0){$\tvgr$};
    \path[->](K) edge (L); \path[->](K) edge (R);
    \node[draw,rounded corners](L') at (0,1) {$\dvgr$};
    \node[draw,rounded corners](K') at (1,1){$\dvgr$};
    \node[draw,rounded corners](R') at (2.15,1){$\tvgr$};
    \path[->](K') edge (L'); \path[->](K') edge (R');
    \path[->,dotted](L) edge node[left, font=\footnotesize] {$\rulem_1$} (L'); 
    \path[->,dotted](K) edge node[right, font=\footnotesize] {$\rulem_2$} (K'); 
    \path[->,dotted](R) edge node[right, font=\footnotesize] {$\rulem_3$} (R'); 
  \end{tikzpicture}
\end{center}

We have $\rulem^2 = \id{\arule}$ hence $\rulem$ is an automorphism of
$\arule$. Adding $\rulem$ to $\rulesys$ means that the symmetric
applications of $\arule$, i.e., direct transformations with matchings
$m$ and $m\circ\rulem$, shall be merged (this seems to generalize to
the algebraic context the notion of Parallel Rewriting Modulo
Automorphism devised in an algorithmic approach in \cite{BdlTE20c}).
Since $\rulem^+\circ\rulem = \rulem^-$ and
$\rulem^-\circ\rulem = \rulem^+$, the new rule system is
\[\rulesys' = \raisebox{-3.9ex}{\begin{tikzpicture}[xscale=1.5] \node
    (rho) at (0,0) {$\arule$}; \node (rho') at (1,0)
    {$\arule'$}; \path[->,bend left] (rho) edge node[above,
    font=\footnotesize]
    {$\rulem^+$} (rho'); \path[->,bend right] (rho) edge node[below,
    font=\footnotesize]
    {$\rulem^-$} (rho'); \path[->,out=-135,in=135,looseness=2.5] (rho)
     edge node[left, font=\footnotesize] {$\rulem$} (rho);
\end{tikzpicture}}\] If we apply $\rulesys'$ to $G$,  we add two new morphisms in
$\directcatDPOm|_{G}^{\rulesys}$, i.e, \[\stratcat'=\directcatDPOm|_{G}^{\rulesys'}=
\raisebox{-6.9ex}{
  \begin{tikzpicture}
    \node (T) at (0,1){$\directobj'_1$};
    \node (B) at (0,-1){$\directobj'_2$};
    \node (L) at (-1,0){$\directobj^+$};
    \node (R) at (1,0){$\directobj^-$};
    \path[->] (L) edge node[left, font=\footnotesize] {$\directmorph_1^+$} (T);
    \path[->] (R) edge node[right, font=\footnotesize] {$\,\directmorph_1^-$} (T);
    \path[->] (L) edge node[left, font=\footnotesize] {$\directmorph_2^+$} (B);
    \path[->] (R) edge node[right, font=\footnotesize]
    {$\directmorph_2^-$} (B);
    \path[->,bend left] (L) edge (R); \path[->,bend left] (R) edge (L);
  \end{tikzpicture}}\]
It is easy to see that the Global Coherent Transformation by
$\stratcat'$ is the same as above with $\stratcat$. This is due to the
fact that $\directobj^+$ and $\directobj^-$ are already related in
$\stratcat$ through $\directobj'_1$ (or $\directobj'_2$).
\end{example}

We finally prove that, apart from this mechanism of sharing common
subtransformations, isolated transformations always subsume their
subtransformations, so that morphisms in $\rulecat$ are rule
subsumptions as intended.

\begin{proposition}\label{pr-subsume}
  If $\stratcat'$ is restricted to $\directobj'$ and $\stratcat$ to
  $\directmorph:\directobj\rightarrow \directobj'$ (or more generally
  if $\directobj'$ is terminal in $\stratcat$) then $\stratcat$ and
  $\stratcat'$ are globally coherent and
  $H_{\stratcat} \simeq H_{\stratcat'}$.
\end{proposition}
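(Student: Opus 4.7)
The plan is to exploit the hypothesis that $\directobj'$ is terminal in $\stratcat$, equivalently initial in $\dual\stratcat$, so that both the defining limit (over $\dual\stratcat$) and the defining colimit (over $\stratcat$) reduce to evaluation at $\directobj'$.

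First I would dispatch $\stratcat' = \{\directobj'\}$: the singleton family with leg $\cone{\directobj'}\directobj'=\kc_{\directobj'}$ trivially satisfies both requirements, so the limit $f_{\stratcat'}$ is $\fc_{\directobj'}$, the factorisation $c_{\directobj'}$ equals $\kc_{\directobj'}$, and the colimit of the one-object $\ProjH$ produces $H_{\stratcat'}=H_{\directobj'}$, the pushout of $\Rc_{\directobj'}\xleftarrow{\rc_{\directobj'}}\Kc_{\directobj'}\xrightarrow{\kc_{\directobj'}}\Dc_{\directobj'}$.

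For the general case, I would first observe that $\directobj'$ being initial in $\dual\stratcat$ forces $\lim\ProjG=\ProjG\directobj'=\fc_{\directobj'}$, so $\interf=\Dc_{\directobj'}$ and the limit cone has legs $\intercone\directobj=\projd\eta_\directobj$, where $\eta_\directobj:\directobj\to\directobj'$ denotes the unique morphism. To exhibit a coherent system of cones I would define $\cone{\directobj'}\directobj''=\projd\eta_{\directobj''}\circ\kc_{\directobj'}$ and then set $\cone{\directobj}=\cone{\directobj'}\circ\projk\eta_\directobj$ for each $\directobj\ne\directobj'$, as forced by the compatibility axiom along $\eta_\directobj$. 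The condition $\cone{\directobj}\directobj=\kc_\directobj$ then reduces to the identity $\projd\eta_\directobj\circ\kc_{\directobj'}\circ\projk\eta_\directobj=\kc_\directobj$, which is precisely the commutation of $\partialF\insertGF\insertSF\eta_\directobj$ viewed as a $\partialcat$-morphism. Both the naturality of each $\cone{\directobj}$ as a cone over $\ProjG$ and the compatibility $\cone{\directobj}=\cone{\directobj''}\circ\projk\directmorph$ along any $\directmorph:\directobj\to\directobj''$ then reduce, by functoriality of $\projd$ and $\projk$, to the uniqueness of morphisms into the terminal object $\directobj'$.

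To finish, since $\intercone\directobj'=\id{\Dc_{\directobj'}}$ and $\cone{\directobj'}\directobj'=\kc_{\directobj'}$, the unique factorisation gives $c_{\directobj'}=\kc_{\directobj'}$, so the pushout defining $H_{\directobj'}$ coincides with the one used for $\stratcat'$. Because $\directobj'$ is also terminal in $\stratcat$, the colimit of the covariant functor $\ProjH:\stratcat\to\coslice{\interf}$ equals $\ProjH\directobj'=h_{\directobj'}$, whence $H_{\stratcat}\simeq H_{\directobj'}=H_{\stratcat'}$. The most delicate step will be the construction of the cones $\cone{\directobj}$ together with the verification of their two coherence conditions, where one must juggle the opposite variances of $\projd$ and $\projk$; once the identity $\projd\eta\circ\kc_{\directobj'}\circ\projk\eta=\kc_\directobj$ has been extracted from the $\partialcat$-morphism structure, the remaining checks are routine diagram chases.
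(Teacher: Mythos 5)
Your proposal is correct and follows essentially the same route as the paper's proof: reduce the limit to $\fc_{\directobj'}$ via the initiality of $\directobj'$ in $\dual{\stratcat}$, build the coherent system of cones as $\cone{\directobj}=\intercone\circ\kc_{\directobj'}\circ\projk{\tm{\directobj}}$ using the middle-square commutation of the $\partialcat$-morphism $\partialF\insertGF\insertSF\tm{\directobj}$, and dualise for the colimit to get $H_{\stratcat}\simeq H_{\directobj'}\simeq H_{\stratcat'}$. The only cosmetic difference is that you treat the singleton $\stratcat'$ separately up front, whereas the paper absorbs it by noting $\directobj'$ is terminal in $\stratcat'$ as well.
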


\section{Some Rewriting Environments and Their Properties}\label{sec-env}

An obvious property of Rewriting Environments is that they can be
combined: if
$\rulecat_1\xleftarrow{\ruleF_1}\directcat_1\xrightarrow{\partialF_1}
\partialcat$ and
$\rulecat_2\xleftarrow{\ruleF_2}\directcat_2\xrightarrow{\partialF_2}
\partialcat$ are Rewriting Environments for $\cat$ then so is
$\rulecat_1+\rulecat_2 \xleftarrow{\ruleF_1+\ruleF_2}
\directcat_1+\directcat_2 \xrightarrow{[\partialF_1,\partialF_2]}
\partialcat$. It is therefore possible to mix rules of different
approaches to transform a graph, though of course rules of distinct
approaches cannot subsume each other.

A property that one might reasonably expect is that when a rule
applies and yields a direct transformation then its subrules also
apply and yield subtransformations.  We express this by means of the
following notion.

\begin{definition}[right-full]
  A functor $\aFunc:\mathcal{A}\rightarrow\mathcal{B}$ is
  \emph{right-full}\footnote{This is named after the symmetric
    definition of \emph{left-full} functors in
    \cite[p. 63]{SellingerPhd97}.} if for all $a'\in\mathcal{A}$, all
  $b\in\mathcal{B}$ and all $g:b\rightarrow Fa'$, there exist
  $a\in\mathcal{A}$ and $f:a\rightarrow a'$ such that $Ff=g$.
\end{definition}
It is obvious that right-fullness is closed by composition.

\begin{lemma}\label{lm-IGF-full}
  $\insertGF$ is a full and right-full embedding.
\end{lemma}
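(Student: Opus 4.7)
My strategy would be to first unpack the meet defining $\directcatG$ and observe that, because the codomain of the cospan is the discrete category $\discr{\cat}$ and the other leg has domain the terminal category $\termcat$, the pullback in $\mathbf{Cat}$ degenerates. Its objects are exactly those $\directobj\in\directcat$ with $\inputF\partialF\directobj = G$, and a morphism of the pullback is just the datum of a $\directcat$-morphism $\directmorph$ whose image under $\inputF\partialF$ is $\id{G}$. Since $\discr{\cat}$ is discrete, this last condition is automatic once source and target already lie over $G$. Hence I would identify $\directcatG$ with the full subcategory of $\directcat$ on objects of input $G$; then $\insertGF$ is visibly an embedding (injective on objects and faithful by the pullback projection) and full.

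Only right-fullness then needs a further argument. Given $\directobj'\in\directcatG$, $\directobj\in\directcat$ and a $\directcat$-morphism $\directmorph:\directobj\rightarrow\directobj'$, I would apply $\partialF$ to obtain a morphism of $\partialcat$ from $\partialF\directobj$ to $\partialF\directobj'$. Inspecting Definition~\ref{def-partialcat}, every morphism of $\partialcat$ is drawn between partial transformations sharing their $\Gc$-component (witnessed by the equality sign on the left of the displayed square). Consequently $\inputF\partialF\directobj = \inputF\partialF\directobj' = G$, placing $\directobj$ itself in $\directcatG$. By the fullness established in the previous step, $\directmorph$ then lifts to a morphism $\tilde\directmorph:\directobj\rightarrow\directobj'$ of $\directcatG$ with $\insertGF\tilde\directmorph = \directmorph$, as required.

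I do not anticipate a substantive obstacle: the whole lemma is a matter of reading off the pullback construction and noticing that preservation of the input $G$ is already hard-wired into the definition of $\partialcat$. The one point to guard against is confusing fullness (lifting morphisms whose \emph{both} endpoints already lie in $\directcatG$) with right-fullness (lifting morphisms whose \emph{codomain} lies in $\directcatG$, with the domain to be located); both, however, reduce to the same input-preservation observation about $\partialcat$-morphisms.
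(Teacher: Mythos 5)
Your proposal is correct and follows essentially the same route as the paper: identify $\directcatG$ (via the pullback along the full embedding $G:\termcat\rightarrow\discr{\cat}$) with the full subcategory of $\directcat$ on objects with input $G$, and then observe that any $\directcat$-morphism into such an object already has its domain over $G$ — you read this off the equality of $\Gc$-components in $\partialcat$-morphisms, the paper from discreteness of $\discr{\cat}$, which is the same fact. No gap.
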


\begin{proposition}\label{pr-right-full}
  If $\ruleF$ is right-full (resp. faithful) then so is $\ruleFSG$ for
  every rule system $\rulesys$ and 
$G\in\cat$. 
\end{proposition}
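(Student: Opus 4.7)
The plan is to unwind the meet defining $\directcatSG$ and then chain lifts through the three functors $\insertF$, $\ruleF$, $\insertGF$. By construction $\directcatSG$ is the pullback in $\mathbf{Cat}$ of the embedding $\insertF:\rulesys\rightarrowtail\rulecat$ and $\ruleF\circ\insertGF:\directcatG\rightarrow\rulecat$, so one can write its objects and morphisms as compatible pairs $(X,\directobj_G)$ and $(g,\directmorph_G)$ with $\insertF X=\ruleF\insertGF\directobj_G$ and $\insertF g=\ruleF\insertGF\directmorph_G$; this yields the identity $\insertF\circ\ruleFSG=\ruleF\circ\insertGF\circ\insertSF$, and since pullbacks of embeddings are embeddings, $\insertSF$ is itself an embedding.

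For faithfulness, given $\directmorph_1,\directmorph_2:\directobj\rightarrow\directobj'$ in $\directcatSG$ with $\ruleFSG\directmorph_1=\ruleFSG\directmorph_2$, I will apply the faithful embedding $\insertF$ and the identity above to obtain $\ruleF\insertGF\insertSF\directmorph_1=\ruleF\insertGF\insertSF\directmorph_2$, then invoke faithfulness of $\ruleF$ and of the embeddings $\insertGF$ and $\insertSF$ to conclude $\directmorph_1=\directmorph_2$.

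For right-fullness, given $\directobj'\in\directcatSG$ and $g:X\rightarrow\ruleFSG\directobj'$ in $\rulesys$, I will build a lift in three stages. First, applying $\insertF$ produces $\insertF g:\insertF X\rightarrow\ruleF\insertGF\insertSF\directobj'$ in $\rulecat$, and right-fullness of $\ruleF$ supplies $\bar\directmorph:\bar\directobj\rightarrow\insertGF\insertSF\directobj'$ in $\directcat$ with $\ruleF\bar\directmorph=\insertF g$. Second, Lemma~\ref{lm-IGF-full} provides right-fullness of $\insertGF$, so $\bar\directmorph$ lifts to $\directmorph_G:\directobj_G\rightarrow\insertSF\directobj'$ in $\directcatG$ with $\insertGF\directmorph_G=\bar\directmorph$ (and necessarily $\insertGF\directobj_G=\bar\directobj$). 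Third, since $\ruleF\insertGF\directmorph_G=\insertF g$, its domain $\ruleF\insertGF\directobj_G$ equals $\insertF X$, so the pair $(X,\directobj_G)$ is an object and $(g,\directmorph_G)$ a morphism of the pullback $\directcatSG$, with $\ruleFSG(g,\directmorph_G)=g$.

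The main subtlety is the third stage: one has to check that the object $\directobj_G$ obtained in $\directcatG$ actually sits inside $\directcatSG$, i.e., that the pullback compatibility with $\insertF$ holds. This reduces to the observation that the equation $\ruleF\insertGF\directmorph_G=\insertF g$ forces the domain $\ruleF\insertGF\directobj_G$ into the image of the embedding $\insertF$. Everything else is routine chasing through the meet description, and the faithfulness half falls out immediately from the factorisation above.
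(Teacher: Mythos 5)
Your proposal is correct and follows essentially the same route as the paper: the paper composes the right-fullness of $\ruleF$ with that of $\insertGF$ (from Lemma~\ref{lm-IGF-full}) and then uses the meet/pullback description of $\directcatSG$ to assemble the preimage, which is exactly your two-stage lift followed by the compatibility check, and the faithfulness half is likewise the same factorisation through $\insertF\circ\ruleFSG=\ruleF\circ\insertGF\circ\insertSF$. Your version merely makes explicit the pairwise description of the pullback and the domain-matching step that the paper leaves implicit.
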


Hence when $\ruleF$ is right-full and faithful every morphism
$\rulem:\arule\rightarrow\arule'$ in $\rulesys$ is reflected by a
morphism in $\directcatSG$ whenever $\arule'$ is reflected by a direct
transformation $\directobj'$ (i.e., whenever $\arule'$ applies to
$G$), and this morphism is uniquely determined by $\rulem$ and
$\directobj'$. 

\subsection{Double-Pushouts}

Definitions \ref{def-catDPO}, \ref{def-transfoDPO} and
\ref{def-partialcat} provide two Rewriting Environments that we may
call DPO and DPOm. By Proposition~\ref{prop-DPO} it is obvious that
$\DPORF$ and $\DPOmRF$ are faithful  when $\cat$ is the category of
graphs. This is easily seen to generalize to all adhesive categories
\cite{LackS05}. Proposition~\ref{prop-DPO} generalizes
as follows:

\begin{proposition}\label{pr-adhesive}
  If $\cat$ is adhesive, $\directobj,\directobj'\in\directcatDPO$ and
  $\rulem: \DPORF\directobj \rightarrow \DPORF\directobj'$ such that
  $m=m' \circ \rulem_1$ then there exists a unique
  $\directmorph:\directobj\rightarrow \directobj'$ such that
  $\DPORF\directmorph = \rulem$.
\end{proposition}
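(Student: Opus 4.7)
Since $\DPORF\directmorph = \rulem$ forces $\directmorph_1, \directmorph_2, \directmorph_3$ to equal $\rulem_1, \rulem_2, \rulem_3$, and the top-left pullback requirement in Definition~\ref{def-transfoDPO} is immediate from $\rulem\in\rulecatDPO$, the whole task reduces to producing a unique $\directmorph_4 : D' \to D$ with $f\circ\directmorph_4 = f'$. The other required square $\directmorph_4\circ k'\circ\rulem_2 = k$ then comes for free: postcomposing with the monic $f$ yields $f'\circ k'\circ\rulem_2 = m'\circ l'\circ\rulem_2 = m'\circ\rulem_1\circ l = m\circ l = f\circ k$ on both sides.

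The approach is to lean on two features of adhesivity: pushouts along monomorphisms are also pullbacks, and they are stable under pullback (the Van Kampen property); in particular $f$ and $f'$ are monic. The first step is to identify the pullback of $f':D'\to G$ along $m:L\to G$. Using $m = m'\circ\rulem_1$ and pasting through the (pullback) pushout square of $\directobj'$, this pullback reduces to the pullback of $l':K'\to L'$ along $\rulem_1:L\to L'$, which by the pullback condition on $\rulem$ is exactly $K$, with projections $l:K\to L$ and $k'\circ\rulem_2 : K\to D'$.

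The second step is to pull back the entire $\directobj$-pushout along $f'$. Vertex $L$ pulls back to $K$ (just shown); vertex $K$ pulls back to $K$ with leg $\id{K}$ (pasting once more: the pullback of the monic $l$ along itself is trivial); vertex $D$ pulls back to some $\tilde D$ with projections $\tilde f:\tilde D\to D$ and $\tilde p:\tilde D\to D'$. By Van Kampen the resulting square is a pushout, and since its left leg is the isomorphism $\id{K}$, its right leg $\tilde p$ is also an isomorphism. Set $\directmorph_4 = \tilde f\circ\tilde p^{-1}$. Then $f\circ\directmorph_4 = f'$ follows from the pullback square defining $\tilde D$, and uniqueness of $\directmorph_4$ is immediate from monicity of $f$.

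The main obstacle is the bookkeeping in the Van Kampen step: verifying that the pulled-back $L$ and $K$ vertices are both $K$, connected by $\id{K}$, requires chaining several pullback pastings, and the rule-morphism pullback hypothesis is used precisely at the key point. Without it the left leg of the lifted pushout would not be an isomorphism and $\tilde p$ might fail to be invertible, leaving no canonical $\directmorph_4$ to define.
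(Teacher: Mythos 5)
Your proof is correct and takes essentially the same route as the paper's: both establish that $f,f'$ are monic, show by pasting the rule-morphism pullback with the (pushout-hence-pullback) square of $\directobj'$ that $K$ is the pullback of $f'$ along $m$, then pull back the pushout of $\directobj$ along $f'$ and invoke the Van Kampen property to conclude that the projection to $D'$ is an isomorphism, yielding $\directmorph_4$. Your derivation of the remaining square $\directmorph_4\circ k'\circ\rulem_2=k$ by cancelling the monic $f$ is a minor (and slightly cleaner) variation on the paper's direct computation.
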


According to Proposition~\ref{prop-gc} it is obvious
that $\DPOmRF$ is right-full (when $\cat$ is the category of
graphs). It is easy to see that $\DPORF$ is not right-full 
(with $\rulem_1$ not monic, see Proposition~\ref{prop-gc}).

One drawback with span rules is that every item matched by $m$ that is
not removed must be preserved in the result, hence cannot be removed
by an overlapping rule, by the requirement of global coherence. In
\cite{BdlTE21a} we have defined \emph{weak} DPO-rules by inserting a
second interface $I$ between $K$ and $L$. A weak
DPO transformation is a diagram
\begin{center}
  \begin{tikzpicture}[xscale=1.8, yscale=1.5]
    \node (L) at (0,1) {$L$}; \node (K) at (1,1) {$I$};  \node (I) at
    (2,1) {$K$};  \node (R) at (3,1) {$R$}; \node (G) at (0,0) {$G$};
    \node (D1) at (1,0) {$D$}; \node (D2) at (2,0) {$D$}; \node (H) at (3,0) {$H$};
    \path[>->] (K) edge  node[above,fill=white, font=\footnotesize] {$l$} (L);
    \path[>->] (I) edge  node[above,fill=white, font=\footnotesize] {$i$} (K);
    \path[->] (I) edge  node[above,fill=white, font=\footnotesize] {$r$} (R);
    \path[->] (L) edge node[left,fill=white, font=\footnotesize] {$m$} (G);
    \path[->] (K) edge node[left,fill=white, font=\footnotesize] {$k$} (D1);
    \path[->] (D1) edge node[below,fill=white, font=\footnotesize] {$f$} (G);
    \path[->] (I) edge  node[left,fill=white, font=\footnotesize] {$k\circ i$} (D2);
    \path[->] (D2) edge node[below,fill=white, font=\footnotesize] {$g$} (H);
    \path[->] (R) edge  node[right,fill=white, font=\footnotesize] {$n$} (H);
    \path[-] (D1) edge node[above,fill=white, font=\footnotesize] {$=$} (D2);
    \draw (0.3,0.1) to (0.3,0.3) to (0.1,0.3);
    \draw (2.7,0.1) to (2.7,0.3) to (2.9,0.3);
  \end{tikzpicture}
\end{center}
so that the images of items in $I$ are not removed by this
transformation, but only images of items in $K$ may not be removed by
any simultaneous transformation. In cellular automata we need items in
$I$ that match the states of the neighbour cells, but there should be
none in $K$ since these states may be modified by overlapping rules
(see \cite[Example 3]{BdlTE21a}, note that $K$ and $I$ are swapped).

It is easy to define subsumption morphisms between weak DPO-rules (as
4-tuples of $\cat$-morphisms with commuting properties and a pullback
as in Definition \ref{def-catDPO}), and corresponding morphisms
between direct transformations of weak DPO-rules (as 5-tuples of
$\cat$-morphisms with commuting properties and a pullback as in
Definition \ref{def-transfoDPO}). This yields a Rewriting Environment
for weak double-pushouts.

\subsection{Sesqui-Pushouts}

We now consider the case of Sesqui-Pushouts \cite{CorradiniHHK06}. It
is based on the notion of final pullback complement that allows not
only to remove parts of the input $G$ but also to make copies of parts
of $G$. 

\begin{definition}[category $\rulecatSqPO$, direct SqPO-transformations]\label{def-SqPO}
  A \emph{SqPO-rule} $\arule$ in $\cat$ is a span diagram
  $L\xleftarrow{l} K \xrightarrow{r}R$.
  Let $\rulecatSqPO$ be the category whose
  objects are the SqPO-rules and morphisms
  $\rulem:\arule\rightarrow\arule'$ are triples
  $\rulem=\tuple{\rulem_1,\rulem_2,\rulem_3}$ such that
  \begin{center}
    \begin{tikzpicture}[xscale=1.7,yscale=1.5]
      \node (D1) at (1,1) {$L$};
      \node (K1) at (2,1) {$K$};  \node (R1) at (3,1) {$R$};  
      \node (D2) at (1,0) {$L'$};
      \node (K2) at (2,0) {$K'$};  \node (R2) at (3,0) {$R'$};  
      \path[->] (K1) edge node[above, font=\footnotesize] {$l$} (D1);
      \path[->] (K1) edge node[above, font=\footnotesize] {$r$} (R1);
      \path[->] (K2) edge node[below, font=\footnotesize] {$l'$} (D2);
      \path[->] (K2) edge node[below, font=\footnotesize] {$r'$} (R2);
      \path[->] (D1) edge node[left, font=\footnotesize] {$\rulem_1$} (D2);
      \path[->] (K1) edge node[right, font=\footnotesize] {$\rulem_2$} (K2);
      \path[->] (R1) edge node[right, font=\footnotesize] {$\rulem_3$} (R2);
      \draw (1.7,0.9) to (1.7,0.7) to (1.9,0.7);
    \end{tikzpicture}
  \end{center}
  commutes in $\cat$ and the left square is a pullback, with obvious
  composition and identities. Let $\rulecatSqPOm$ be the subcategory with
  morphisms $\rulem$ such that $\rulem_1$ and $\rulem_2$ are monics.

  A \emph{final pullback complement} of $\tuple{m,l}$ is a pair
  $\tuple{f,k}$ such that $\tuple{k,l}$ is a pullback of $\tuple{f,m}$
  and for every pullback $\tuple{k',l\circ c}$ of any $\tuple{f',m}$
  there exists a unique $d$ such that
\begin{center}
  \begin{tikzpicture}[xscale=1.9,yscale=1.6,z=-0.7cm]
    \node(G) at (0,0,0){$G$}; \node(G') at (0,0,1){$G$};
    \node(L) at (0,1,0){$L$};
    \node(L') at (0,1,1){$L$};
    \node(K) at (1,1,0){$K$};
    \node(K') at (1,1,1){$K'$};
    \node(D) at (1,0,0){$D$};
    \node(D') at (1,0,1){$D'$};
    \path[-](G) edge node[above, font=\footnotesize] {=}(G');
    \path[-](L) edge node[above, font=\footnotesize] {$=$}(L');
    \path[<-](K) edge node[right, font=\footnotesize] {$c$}(K');
    \path[<-,dashed](D) edge node[right, font=\footnotesize] {$d$}(D');
    \path[<-](L) edge node[above, font=\footnotesize] {$l$}(K);
    \path[->](L) edge node[right, near start, font=\footnotesize] {$m$}(G);
    \path[->](L') edge node[left, font=\footnotesize] {$m$}(G');
    \path[<-](G) edge node[above,near end, font=\footnotesize] {$f$}(D);
    \path[<-](G') edge node[below, font=\footnotesize] {$f'$}(D');
    \path[->](K) edge node[right, font=\footnotesize] {$k$}(D);
    \path[-](L') edge[draw=white, line width=3pt] (K');
    \path[<-](L') edge node[above, font=\footnotesize] {$l\circ c$}(K');
    \path[-](K') edge[draw=white, line width=3pt] (D');
    \path[->](K') edge node[left,near end, font=\footnotesize] {$k'$}(D');
  \end{tikzpicture}
\end{center}
commutes.

  A \emph{direct SqPO-transformation} in $\cat$ is a diagram
  \begin{center}
    \begin{tikzpicture}[xscale=1.7,yscale=1.5]
      \node (D1) at (0,1) {$L$};
      \node (K1) at (1,1) {$K$};  \node (R1) at (2,1) {$R$};  
      \node (D2) at (0,0) {$G$};
      \node (K2) at (1,0) {$D$};  \node (R2) at (2,0) {$H$};  
      \path[->] (K1) edge node[above, font=\footnotesize] {$l$} (D1);
      \path[->] (K1) edge node[above, font=\footnotesize] {$r$} (R1);
      \path[->] (K2) edge node[below, font=\footnotesize] {$f$} (D2);
      \path[->] (K2) edge node[below, font=\footnotesize] {$g$} (R2);
      \path[->] (D1) edge node[left, font=\footnotesize] {$m$} (D2);
      \path[->] (K1) edge node[right, font=\footnotesize] {$k$} (K2);
      \path[->] (R1) edge node[right, font=\footnotesize] {$n$} (R2);
      \draw (0.7,0.9) to (0.7,0.7) to (0.9,0.7);
      \draw (1.7,0.1) to (1.7,0.3) to (1.9,0.3);
    \end{tikzpicture}
  \end{center}
  such that
  $\tuple{f,k}$ is a final pullback complement of
  $\tuple{m,l}$ and the right square is a pushout.
\end{definition}

\begin{proposition}\label{pr-SqPO}
  For every direct SqPO-transformations $\directobj$, $\directobj'$
  with corresponding SqPO-rules $\arule$, $\arule'$, every
  $\rulem:\arule\rightarrow\arule'$ in $\rulecatSqPO$ such that
  $m = m'\circ \rulem_1$, there exists a
  unique $\cat$-morphism $d$ such that
\begin{center}
  \begin{tikzpicture}[xscale=1.9,yscale=1.6,z=-0.7cm]
    \node(G) at (0,0,0){$G$}; \node(G') at (0,0,1){$G'$};
    \node(L) at (0,1,0){$L$};
    \node(L') at (0,1,1){$L'$};
    \node(K) at (1,1,0){$K$};
    \node(K') at (1,1,1){$K'$};
    \node(D) at (1,0,0){$D$};
    \node(D') at (1,0,1){$D'$};
    \path[-](G) edge node[above, font=\footnotesize] {=}(G');
    \path[->](L) edge node[above, font=\footnotesize] {$\rulem_1$}(L');
    \path[->](K) edge node[right, font=\footnotesize] {$\rulem_2$}(K');
    \path[<-,dashed](D) edge node[right, font=\footnotesize] {$d$}(D');
    \path[<-](L) edge node[above, font=\footnotesize] {$l$}(K);
    \path[->](L) edge node[right, near start, font=\footnotesize] {$m$}(G);
    \path[->](L') edge node[left, font=\footnotesize] {$m'$}(G');
    \path[<-](G) edge node[above,near end, font=\footnotesize] {$f$}(D);
    \path[<-](G') edge node[below, font=\footnotesize] {$f'$}(D');
    \path[->](K) edge node[right, font=\footnotesize] {$k$}(D);
    \path[-](L') edge[draw=white, line width=3pt] (K');
    \path[<-](L') edge node[above, font=\footnotesize] {$l'$}(K');
    \path[-](K') edge[draw=white, line width=3pt] (D');
    \path[->](K') edge node[left,near end, font=\footnotesize] {$k'$}(D');
  \end{tikzpicture}
\end{center}
commutes.  
\end{proposition}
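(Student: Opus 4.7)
The plan is to reduce the claim to the universal property of the final pullback complement inside $\directobj$, by producing a suitable pullback of $\tuple{f',m}$ and then invoking that universal property. First I would observe that the data we have provides two pullback squares that can be pasted along a shared edge. Indeed, since $\rulem\in\rulecatSqPO$, the left square in Definition~\ref{def-SqPO} is by assumption a pullback, so $\tuple{\rulem_2,l}$ is a pullback of the cospan $L\xrightarrow{\rulem_1}L'\xleftarrow{l'}K'$. Also, because $\directobj'$ is a direct SqPO-transformation, $\tuple{f',k'}$ is a final pullback complement of $\tuple{m',l'}$; in particular $\tuple{k',l'}$ is a pullback of the cospan $D'\xrightarrow{f'}G\xleftarrow{m'}L'$.

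Next I would paste these two pullbacks horizontally along their shared edge $l':K'\to L'$ (the $\rulem$-square on the left, sharing $K'$ and $L'$ with the FPC-square on the right), and invoke the pullback pasting lemma to conclude that the outer rectangle with top edge $k'\circ\rulem_2:K\to D'$, bottom edge $m'\circ\rulem_1:L\to G$, and verticals $l$, $f'$ is itself a pullback. Using the hypothesis $m=m'\circ\rulem_1$, this rectangle shows exactly that $\tuple{k'\circ\rulem_2,l}$ is a pullback of $\tuple{f',m}$.

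Finally I would apply the universal property of the final pullback complement $\tuple{f,k}$ of $\tuple{m,l}$ (from $\directobj$) to this newly obtained pullback, instantiated with $c=\id{K}$ (so that $l\circ c=l$), $k''=k'\circ\rulem_2$, and $f''=f'$. This yields a unique $\cat$-morphism $d:D'\to D$ satisfying $f\circ d=f'$ and $d\circ k'\circ\rulem_2 = k\circ\id{K}=k$, which are precisely the two non-trivial commutativities in the cube; the remaining faces commute by the hypotheses on $\rulem$ and the assumption $m=m'\circ\rulem_1$.

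I do not expect a real obstacle here, as the argument amounts to two applications of standard pullback machinery. The only minor point to watch is the matching of shapes: the FPC universal property demands a pullback of the form $\tuple{k'',l\circ c}$, whereas pasting naturally produces one of the form $\tuple{k'\circ\rulem_2,l}$. This is reconciled by the trivial choice $c=\id{K}$, which is legitimate because the second leg of our pasted pullback is literally $l$. Observe also that, in contrast with the DPO analogue (Proposition~\ref{prop-DPO}), no monicity hypothesis on $\rulem_1$ is required, since the argument relies only on pullback composition rather than on a subobject classification of pushout complements.
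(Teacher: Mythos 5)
Your proof is correct and follows exactly the paper's argument: compose the pullback square of $\rulem$ with the pullback underlying the final pullback complement of $\tuple{m',l'}$ to see that $\tuple{k'\circ\rulem_2,l}$ is a pullback of $\tuple{f',m}$, then invoke the universal property of the final pullback complement $\tuple{f,k}$ with $c=\id{K}$. The paper compresses this into two sentences; you have merely spelled out the pasting step.
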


Here the existence of $d$ means not only that $\arule'$ removes at
least as much as its subrule $\arule$, but also that it makes at least
as many copies of the items of $G$. Note that when, among two
simultaneous transformations, one makes $p$ copies of an item and the
other makes $q$ copies of the same item, the global context must
contain $pq$ copies of this item, \emph{unless} there is a subsumption
morphism between them. In such a case all the copies made by the
subsumed transformation are simply merged with those made by the
subsuming one (as witnessed by Proposition~\ref{pr-subsume}). Hence
the necessary symmetry between the first and last steps of the Global
Coherent Transformation.

It is then easy to define the category $\directcatSqPO$ of direct
SqPO-transformations, the category $\directcatSqPOm$ of direct
SqPO-transformations with monic matches and faithful functors
$\SqPORF:\directcatSqPO\rightarrow \rulecatSqPO$ and
$\SqPOmRF:\directcatSqPOm\rightarrow \rulecatSqPOm$, as in
Definition~\ref{def-transfoDPO}. We leave this to the reader.

\begin{proposition}\label{pr-SqPO-lf}
  In the category of graphs $\SqPOmRF$ is right-full.
\end{proposition}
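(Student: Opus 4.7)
The plan is to show that given a morphism $\rulem:\arule\rightarrow\arule'$ in $\rulecatSqPOm$ and a direct SqPO-transformation $\directobj'\in\directcatSqPOm$ with $\SqPOmRF\directobj'=\arule'$, we can construct a direct SqPO-transformation $\directobj\in\directcatSqPOm$ and a morphism $\directmorph:\directobj\rightarrow\directobj'$ with $\SqPOmRF\directmorph=\rulem$. The only candidate for the matching of $\arule$ in $G'$ is $m\defeq m'\circ\rulem_1$; since $\rulem\in\rulecatSqPOm$ forces $\rulem_1$ monic and $\directobj'\in\directcatSqPOm$ forces $m'$ monic, $m$ is monic.

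First I would invoke the known fact that in the category of graphs, the final pullback complement of $\tuple{m,l}$ exists whenever $m$ is monic (this is standard from \cite{CorradiniHHK06}). This supplies the lower-left square $L\xleftarrow{m}G'\xleftarrow{f}D\xleftarrow{k}K$ of a direct SqPO-transformation with matching $m$. Second, since pushouts always exist in graphs, the span $D\xleftarrow{k}K\xrightarrow{r}R$ yields an object $H$ and morphisms $g,n$ completing the diagram. Together these produce a direct SqPO-transformation $\directobj$ with $\SqPOmRF\directobj=\arule$ and with monic match $m$, hence $\directobj\in\directcatSqPOm$.

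Third, I would apply Proposition~\ref{pr-SqPO}, whose hypothesis $m=m'\circ\rulem_1$ holds by definition of $m$, to obtain the unique $\cat$-morphism $d:D'\rightarrow D$ making the relevant cube commute. Setting $\directmorph=\tuple{\rulem_1,\rulem_2,\rulem_3,d}$, the full cube of Definition~\ref{def-transfoDPO} (adapted to SqPO) commutes by construction, and its top-left square is a pullback because this is precisely the defining condition for $\rulem$ to be a morphism of $\rulecatSqPO$. Hence $\directmorph$ is a morphism in $\directcatSqPOm$, and by construction $\SqPOmRF\directmorph=\tuple{\rulem_1,\rulem_2,\rulem_3}=\rulem$.

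The main obstacle is the existence of the final pullback complement in step two: SqPO-rules do not require $l$ to be monic, so one cannot appeal to adhesivity alone. I would therefore cite the concrete construction of FPCs in $\mathbf{Graph}$ for monic $m$ (available in the literature on SqPO rewriting), which is precisely the setting of $\SqPOmRF$. The remaining verifications are routine diagram chases using the uniqueness given by Proposition~\ref{pr-SqPO} and the pullback/commutation conditions built into Definitions~\ref{def-SqPO} and~\ref{def-transfoDPO}.
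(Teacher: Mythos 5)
Your proposal is correct and follows essentially the same route as the paper's own proof: form the monic matching $m=m'\circ\rulem_1$, invoke the existence of final pullback complements along monic matchings in graphs (Construction~6 of the Sesqui-Pushout paper) to build $\directobj$, and then apply Proposition~\ref{pr-SqPO} to obtain the morphism $\directmorph$ with $\SqPOmRF\directmorph=\rulem$. The extra detail you supply (the pushout giving $H$, and the pullback condition on the top-left square) is a harmless elaboration of the same argument.
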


Another notion of subrule in the Sesqui-Pushout approach can be found
in \cite[Definition 8]{Lowe15}, where a rule $\arule'$ is defined as a
$\tuple{\rulem_1,\rulem_3}$-extension of $\arule$ if two conditions
are met. The first is that $\rulem_3\circ\arule = \arule'\circ \rulem_1$,
where $\rulem_1$ stands for the span
$L\xleftarrow{\id{L}}L\xrightarrow{\rulem_1} L'$ (and similarly for
$\rulem_3$) and $\circ$ is the
standard composition of spans (using pullbacks, see \cite[Definition
3]{Lowe15}). The products $\rulem_3\circ\arule$, $\arule'\circ
\rulem_1$ yield
\begin{center}
  \begin{tikzpicture}[scale=1.3]
    \node(L1) at (-2,0){$L$}; \node(M1) at (0,0){$K$}; \node(R1) at (2,0){$R'$}; 
    \node(U1) at (-1,1){$K$}; \node(V1) at (1,1){$R$};\node(T1) at (0,2){$R$};
    \path[->] (U1) edge node[above left, font=\footnotesize] {$l$} (L1); 
    \path[<-] (U1) edge node[below left, font=\footnotesize] {$\id{K}$} (M1); 
    \path[->] (V1) edge node[above right, font=\footnotesize] {$\rulem_3$} (R1); 
    \path[<-] (V1) edge node[below right, font=\footnotesize] {$r$} (M1); 
    \path[<-] (T1) edge node[above left, font=\footnotesize] {$r$} (U1); 
    \path[<-] (T1) edge node[above right, font=\footnotesize] {$\id{R}$} (V1); 
    \path[->] (M1) edge (L1); 
    \path[->] (M1) edge (R1);
    \draw (-0.2,0.3) to (0,0.5) to (0.2,0.3);
    \node(L2) at (-2+6,0){$L$}; \node(M2) at (0+6,0){$K$}; \node(R2) at (2+6,0){$R'$}; 
    \node(U2) at (-1+6,1){$L$}; \node(V2) at (1+6,1){$K'$};\node(T2) at (0+6,2){$L'$}; 
    \path[->] (U2) edge node[above left, font=\footnotesize] {$\id{L}$} (L2); 
    \path[<-] (U2) edge node[below left, font=\footnotesize] {$$} (M2); 
    \path[->] (V2) edge node[above right, font=\footnotesize] {$r'$} (R2); 
    \path[<-] (V2) edge node[below right, font=\footnotesize] {$\rulem_2$} (M2); 
    \path[<-] (T2) edge node[above left, font=\footnotesize] {$\rulem_1$} (U2); 
    \path[<-] (T2) edge node[above right, font=\footnotesize] {$l'$} (V2); 
    \path[->] (M2) edge (L2); 
    \path[->] (M2) edge (R2);
    \draw (-0.2+6,0.3) to (0+6,0.5) to (0.2+6,0.3);
  \end{tikzpicture}
\end{center}
hence the equality between these two bottom spans is equivalent to the
existence of $\tuple{\rulem_1,\rulem_2,\rulem_3}:\arule\rightarrow
\arule'$, i.e. that the left square in Definition \ref{def-SqPO} is a
pullback and the right square commutes. This means that any extension
of a rule according to \cite[Definition 8]{Lowe15} subsumes this rule
according to Definition~\ref{def-SqPO}. The converse is false since
the extension requires a second condition, namely that
$\tuple{\rulem_1,l}$ has a final pullback complement. This ensures
that the extension can be decomposed as a product of two spans
\cite[Proposition 9]{Lowe15}, but this is relevant to sequential
rewriting and not to the present notion of subsumption.

\subsection{Pullback-Pushouts}
We next consider the case of PBPO-rules \cite{CorradiniDEPR19}, that
also enables copies of parts of $G$ but with better control of the way
they are linked together and to the rest of $G$. The drawback is that
matchings of the left-hand side of a rule into $G$ should be completed
with a co-match from $G$ to a given ``type'' of the left-hand side.

\begin{definition}[category $\pbporulecat$, direct PBPO-transformations]
  A \emph{PBPO-rule} $\arule$ in $\cat$ is a commuting diagram
  \begin{center}
    \begin{tikzpicture}[xscale=1.7,yscale=1.5]
      \node (L) at (1,1) {$L$};
      \node (K) at (2,1) {$K$};  \node (R) at (3,1) {$\Rc$};  
      \node (TL) at (1,0) {$T_L$};
      \node (TK) at (2,0) {$T_K$};  \node (TR) at (3,0) {$T_R$};  
      \path[->] (K) edge node[above, font=\footnotesize] {$l$} (L);
      \path[->] (K) edge node[above, font=\footnotesize] {$r$} (R);
      \path[->] (TK) edge node[below, font=\footnotesize] {$u$} (TL);
      \path[->] (TK) edge node[below, font=\footnotesize] {$v$} (TR);
      \path[->] (L) edge node[left, font=\footnotesize] {$t_L$} (TL);
      \path[->] (K) edge node[right, font=\footnotesize] {$t_K$} (TK);
      \path[->] (R) edge node[right, font=\footnotesize] {$t_R$} (TR);
    \end{tikzpicture}
  \end{center}
  A morphism
  $\rulem:\arule\rightarrow \arule'$ is a 5-tuple
  $\tuple{\rulem_1, \rulem_2, \rulem_3, \rulem_4, \rulem_5}$ of
  $\cat$-morphisms such that
  \begin{center}
    \begin{tikzpicture}[xscale=1.9,yscale=1.7,z=-0.7cm]
      \node (L) at (1,1,0) {$L$};
      \node (K) at (2,1,0) {$K$};  \node (R) at (3,1,0) {$R$};  
      \node (TL) at (1,0,0) {$T_L$};
      \node (TK) at (2,0,0) {$T_K$};  
      \path[->] (K) edge node[above, font=\footnotesize] {$l$} (L);
      \path[->] (K) edge node[above, font=\footnotesize] {$r$} (R);
      \path[->] (TK) edge node[below, font=\footnotesize] {$u$} (TL);
      \path[->] (L) edge node[near start,right, font=\footnotesize] {$t_L$} (TL);
      \path[->] (K) edge node[near start,right, font=\footnotesize] {$t_K$} (TK);
      \node (L1) at (1,1,1) {$L'$};
      \node (K1) at (2,1,1) {$K'$};  \node (R1) at (3,1,1) {$R'$};  
      \node (TL1) at (1,0,1) {$T_{L'}$};
      \node (TK1) at (2,0,1) {$T_{K'}$};
      \path[-](K1) edge[draw=white, line width=3pt] (L1);
      \path[->] (K1) edge node[above, font=\footnotesize] {$l'$} (L1);
      \path[-](K1) edge[draw=white, line width=3pt] (R1);
      \path[->] (K1) edge node[above, font=\footnotesize] {$r'$} (R1);
      \path[->] (TK1) edge node[below, font=\footnotesize] {$u'$} (TL1);
      \path[->] (L1) edge node[left, font=\footnotesize] {$t_{L'}$} (TL1);
      \path[-](K1) edge[draw=white, line width=3pt] (TK1);
      \path[->] (K1) edge node[near end,left, font=\footnotesize] {$t_{K'}$} (TK1);
      \path[->] (L) edge node[left, font=\footnotesize] {$\rulem_1$} (L1);
      \path[->] (K) edge node[left, font=\footnotesize] {$\rulem_2$} (K1);
      \path[->] (R) edge node[right, font=\footnotesize] {$\rulem_3$} (R1);
      \path[->] (TL1) edge node[left, font=\footnotesize] {$\rulem_4$} (TL);
      \path[->] (TK1) edge node[right, font=\footnotesize] {$\rulem_5$} (TK);
    \end{tikzpicture}
  \end{center}
  commutes. Let $\pbporulecat$ be the category of PBPO-rules on $\cat$
  and their morphisms, with obvious composition and identities.

  A \emph{direct PBPO-transformation} in  $\cat$ is a commuting diagram
  \begin{center}
    \begin{tikzpicture} [xscale=1.7,yscale=1.5]
      \node (L) at (1,2) {$L$};
      \node (K) at (2,2) {$K$};  \node (R) at (3,2) {$R$};  
      \node (G) at (1,1) {$G$};
      \node (D) at (2,1) {$D$};  \node (H) at (3,1) {$H$};  
      \node (TL) at (1,0) {$T_L$};
      \node (TK) at (2,0) {$T_K$};  \node (TR) at (3,0) {$T_R$};  
      \path[->] (L) edge node[right, font=\footnotesize] {$m$} (G);
      \path[->] (K) edge node[right, font=\footnotesize] {$k$} (D);
      \path[->] (R) edge node[right, font=\footnotesize] {$n$} (H);
      \path[->] (D) edge node[below, font=\footnotesize] {$f$} (G);
      \path[->] (D) edge node[below, font=\footnotesize] {$g$} (H);
      \path[->] (G) edge node[right, font=\footnotesize] {$t_G$} (TL);
      \path[->] (D) edge node[right, font=\footnotesize] {$t_D$} (TK);
      \path[->] (H) edge node[right, font=\footnotesize] {$t_H$} (TR);
      \path[->] (K) edge node[above, font=\footnotesize] {$l$} (L);
      \path[->] (K) edge node[above, font=\footnotesize] {$r$} (R);
      \path[->] (TK) edge node[below, font=\footnotesize] {$u$} (TL);
      \path[->] (TK) edge node[below, font=\footnotesize] {$v$} (TR);
      \path[->,bend right=35] (L) edge node[near start,left, font=\footnotesize] {$t_L$} (TL);
      \path[-,bend right=35](K) edge[draw=white, line width=3pt] (TK);
      \path[->,bend right=35] (K) edge node[near start,left, font=\footnotesize] {$t_K$} (TK);
      \path[-,bend right=35](R) edge[draw=white, line width=3pt] (TR);
      \path[->,bend right=35] (R) edge node[near start,left, font=\footnotesize] {$t_R$} (TR);      
      \draw (1.7,0.9) to (1.7,0.7) to (1.9,0.7);
      \draw (2.7,1.1) to (2.7,1.3) to (2.9,1.3);
    \end{tikzpicture}
  \end{center}
  with lower left pullback and upper right pushout.
\end{definition}

To every direct PBPO-transformation obviously corresponds a PBPO-rule
and a partial transformation.

\begin{proposition}\label{pr-pbpo}
  For every direct PBPO-transformations $\directobj$, $\directobj'$
  with corresponding PBPO-rules $\arule$, $\arule'$, every
  $\rulem:\arule\rightarrow\arule'$ in $\pbporulecat$ such that
  $m = m'\circ \rulem_1$ and
  $t_G=\rulem_4\circ t_{G'}$, there exists a
  unique $\cat$-morphism $d$ such that
  \begin{center}
    \begin{tikzpicture}[xscale=1.9,yscale=1.7,z=-0.7cm]
      \node (L) at (1,1,0) {$L$};
      \node (K) at (2,1,0) {$K$};  
      \node (G) at (1,0,0) {$G$};
      \node (D) at (2,0,0) {$D$};  
      \node (TL) at (1,-1,0) {$T_L$};
      \node (TK) at (2,-1,0) {$T_K$};  
      \path[->] (K) edge node[above, font=\footnotesize] {$l$} (L);
      \path[->] (TK) edge node[below, font=\footnotesize] {$u$} (TL);
      \path[->] (D) edge node[above,near start, font=\footnotesize] {$f$} (G);
      \path[->] (L) edge node[near start,right, font=\footnotesize] {$m$} (G);
      \path[->] (K) edge node[right, font=\footnotesize] {$k$} (D);
      \path[->] (G) edge node[left, font=\footnotesize] {$t_G$} (TL);
      \path[->] (D) edge node[right, font=\footnotesize] {$t_D$} (TK);
      \node (L1) at (1,1,1) {$L'$};
      \node (K1) at (2,1,1) {$K'$};
      \node (G1) at (1,0,1) {$G'$};
      \node (D1) at (2,0,1) {$D'$};
      \node (TL1) at (1,-1,1) {$T_{L'}$};
      \node (TK1) at (2,-1,1) {$T_{K'}$};
      \path[-](D1) edge[draw=white, line width=3pt] (G1);
      \path[->] (D1) edge node[below,near end, font=\footnotesize] {$f'$} (G1);
      \path[-](K1) edge[draw=white, line width=3pt] (L1);
      \path[->] (K1) edge node[above, font=\footnotesize] {$l'$} (L1);
      \path[->] (TK1) edge node[below, font=\footnotesize] {$u'$} (TL1);
      \path[->] (L1) edge node[left, font=\footnotesize] {$m'$} (G1); 
      \path[->] (G1) edge node[left, font=\footnotesize] {$t_{G'}$} (TL1); 
      \path[-](K1) edge[draw=white, line width=3pt] (D1); 
      \path[->] (K1) edge node[right, font=\footnotesize] {$k'$} (D1); 
      \path[-](D1) edge[draw=white, line width=3pt] (TK1); 
      \path[->] (D1) edge node[near end,left, font=\footnotesize] {$t_{D'}$} (TK1); 
      \path[->] (L) edge node[left, font=\footnotesize] {$\rulem_1$} (L1);
      \path[->] (K) edge node[left, font=\footnotesize] {$\rulem_2$} (K1);
      \path[->] (TL1) edge node[left, font=\footnotesize] {$\rulem_4$} (TL);
      \path[->] (TK1) edge node[right, font=\footnotesize] {$\rulem_5$} (TK);
      \path[-] (G) edge node[left, font=\footnotesize] {$=$} (G1);
      \path[->,dashed] (D1) edge node[right, font=\footnotesize] {$d$} (D);
    \end{tikzpicture}
  \end{center}
commutes.  
\end{proposition}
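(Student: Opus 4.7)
The plan is to construct $d$ by invoking the universal property of the lower-left pullback in $\directobj$, which exhibits $D$ as the pullback of $t_G:G\rightarrow T_L$ along $u:T_K\rightarrow T_L$. Once $d$ is obtained from its two defining projections $f\circ d = f'$ and $t_D\circ d = \rulem_5\circ t_{D'}$, the remaining commutations in the cube will be checked by reusing pullback uniqueness.

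First I would set up a cone from $D'$ to the diagram $G\xrightarrow{t_G} T_L \xleftarrow{u} T_K$: take $f':D'\rightarrow G'=G$ as the leg into $G$, and $\rulem_5\circ t_{D'}:D'\rightarrow T_{K'}\rightarrow T_K$ as the leg into $T_K$. To check this is a cone, I compute
\[
t_G\circ f' \;=\; \rulem_4\circ t_{G'}\circ f' \;=\; \rulem_4\circ u'\circ t_{D'} \;=\; u\circ \rulem_5\circ t_{D'},
\]
where the first equality is the hypothesis $t_G=\rulem_4\circ t_{G'}$, the second uses the lower-left pullback of $\directobj'$ which ensures $t_{G'}\circ f' = u'\circ t_{D'}$, and the third is the bottom face of the rule-morphism diagram, $\rulem_4\circ u' = u\circ \rulem_5$. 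The universal property then yields a unique $d:D'\rightarrow D$ with $f\circ d = f'$ and $t_D\circ d = \rulem_5\circ t_{D'}$.

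Next I would verify that this $d$ also makes the square with $k$, $k'$, $\rulem_2$ commute, i.e. $d\circ k'\circ \rulem_2 = k$. By the same pullback, it suffices to show equality after post-composing with $f$ and with $t_D$. For $f$:
\[
f\circ d\circ k'\circ \rulem_2 \;=\; f'\circ k'\circ \rulem_2 \;=\; m'\circ l'\circ \rulem_2 \;=\; m'\circ \rulem_1\circ l \;=\; m\circ l \;=\; f\circ k,
\]
using in turn the defining equation of $d$, the commutation $f'\circ k' = m'\circ l'$ inherent to $\directobj'$, the rule-morphism equation $l'\circ \rulem_2 = \rulem_1\circ l$, and the hypothesis $m = m'\circ \rulem_1$. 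For $t_D$:
\[
t_D\circ d\circ k'\circ \rulem_2 \;=\; \rulem_5\circ t_{D'}\circ k'\circ \rulem_2 \;=\; \rulem_5\circ t_{K'}\circ \rulem_2 \;=\; t_K \;=\; t_D\circ k,
\]
using $t_{D'}\circ k' = t_{K'}$ (which comes from $k'$ being the morphism into the pullback induced by $m'\circ l'$ and $t_{K'}$) and the corresponding rule-morphism equation for types. Uniqueness of $d$ with respect to the whole diagram is immediate, since any such $d$ must in particular satisfy $f\circ d = f'$ and $t_D\circ d = \rulem_5\circ t_{D'}$, which already pin it down.

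The only genuine obstacle is bookkeeping: the three-dimensional diagram carries two planes of objects and reverses the direction of the type-level components ($\rulem_4,\rulem_5$) relative to the object-level ones ($\rulem_1,\rulem_2,\rulem_3$), so one has to be careful about which faces commute in which orientation. Once the cone is presented correctly, the argument is the same pattern of pullback-chase used for Propositions~\ref{prop-DPO} and \ref{pr-SqPO}, and nothing essentially new is needed.
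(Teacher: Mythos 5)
Your proof is correct and follows essentially the same route as the paper's: both construct $d$ from the universal property of the pullback defining $D$, using the identical cone $(f',\,\rulem_5\circ t_{D'})$ verified by the same chain $t_G\circ f'=\rulem_4\circ t_{G'}\circ f'=\rulem_4\circ u'\circ t_{D'}=u\circ\rulem_5\circ t_{D'}$, and then both establish $d\circ k'\circ\rulem_2=k$ by postcomposing with the mono-source $(f,t_D)$ via the same two equation chains. No substantive difference.
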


We leave it to the reader to define a Rewriting Environment for
PBPO-rules and transformations, with a right-full faithful functor
$\PBPORF: \directcatPBPO\rightarrow \rulecatPBPO$ (provided $\cat$ has
pushouts and pullbacks).

\section{Conclusion and Future Work}\label{sec-conclusion}

Global Coherent Transformations are built from partial transformations
in a way pertaining both to Parallel Coherent Transformations
\cite{BdlTE21a}, by the use of limits on local contexts, and to Global
Transformations \cite{MaignanS15} by applying categories of
rules. The partial transformations involved in a Global Coherent
Transformation are extracted from a Rewriting Environment that provide
a category of rules and a corresponding category of direct
transformations. Their morphisms can be understood as subsumptions due
to Property~\ref{pr-subsume}, i.e., that any subsumed transformation
as defined by a morphism removes or adds nothing more than the
subsuming transformation. This is valid even when rules are able to
make multiple copies of parts of the input.

We have provided Rewriting Environments for the most common approaches
to algebraic rewriting, except the Single Pushout \cite{Lowe93}, which
will be done in a future paper (where we will see that the
interface and right-hand side provided in a partial transformation are
not necessarily those of the applied rule).  We also intend to show
that Global Transformations can be obtained as Global Coherent
Transformations in a suitable environment (except when $\stratcat$ is
empty). Expressiveness of Global Coherent Transformations should be
investigated further, and possibly enhanced.

The notion of Rewriting Environment is as simple as required to define
Global Coherent Transformations, but does not guarantee some
properties that the user might reasonably expect. In particular it
does not prevent the categories $\rulecat$ and $\directcat$ from being
discrete. Of course this is correct if no subsumption is possible, but
is there a way to characterize such properties? It may also seem
strange that, through $\partialcat$, rules are not assumed to have
left-hand sides and direct transformations are not assumed to use
matchings. Thus we may need to enhance Rewriting Environments with a
notion of matching in order to better understand their structure.  We
also need to further analyze the properties of the Rewriting
Environments in Section~\ref{sec-env}: when $\cat$ is an adhesive
category it is an open question whether $\DPOmRF$ is right-full.

\noindent\textbf{Acknowledgements} We thank Rachid Echahed for helpful
discussions and an anonymous reviewer in particular for suggesting the
generalization of Proposition~\ref{pr-subsume}.


\section*{Appendix: Proofs}

\begin{proof}[Proof of Proposition \ref{prop-gc}]
  \emph{If $\rulem:\arule\rightarrow\arule'$ in $\rulecatDPO$ such
    that $\rulem_1$ is monic and $m':L'\rightarrow G$ satisfies the
    gluing condition for $\arule'$ then so does $m'\circ \rulem_1:
    L\rightarrow G$ for $\arule$.}
  
  We use the fact that the pullback $K$ of $l',\,\rulem_1$ is
  isomorphic to en equalizer in $L\times K'$.
  \begin{itemize}
  \item[(GC1)] Let $x$ be an item in $L$ such that
    $m'\circ\rulem_1(x)$ is marked for removal for $\arule$, hence
    such that $x$ has no preimage by $l$, and let $x'$ in $L$ such
    that $m'\circ\rulem_1(x)=m'\circ \rulem_1(x')$.
    If $\rulem_1(x)$ had a preimage $y$ by $l'$ then $x$ and $y$ would
    have a common preimage in the pullback $K$, a contradiction. Hence
    $\rulem_1(x)$ has no preimage by $l'$ so that $m'(\rulem_1(x))$ is
    marked for removal by $m'$, hence $\rulem_1(x) = \rulem_1(x')$ by
    the (GC1) for $m',\,\arule'$, hence $x=x'$.
  \item[(GC2)] Let $v$ be a vertex of $L$ that has no preimage by
  $l$ and is adjacent to an edge $e$ in $L$, then as
  above $\rulem_1(v)$ has no preimage by $l'$. If $e$ had a
  preimage $e'$ by $l$ then
  $l'\circ\rulem_2(e') = \rulem_1\circ l(e') =
  \rulem_1(e)$, i.e., $\rulem_1(e)$ would have a preimage by
  $l'$ in contradiction with (GC2) for
  $m',\,\arule'$. Hence $m'\circ\rulem_1(e)$ is marked for removal by
  $m'\circ\rulem_1$ for $\arule'$. 
  \end{itemize}
\end{proof}

\begin{proof}[Proof of Proposition \ref{prop-DPO}]
  \emph{If $\rulem:\arule\rightarrow\arule'$ in $\rulecatDPO$,
  $m':L'\rightarrow G$ and
  $m'\circ \rulem_1:L\rightarrow G$ have pushout complements as
  below, then there is a unique $d$ such that
  \begin{center}
    \begin{tikzpicture}[xscale=1.9,yscale=1.6,z=-0.7cm]
      \node(TK) at (0,0,0){$G$}; \node(TK') at (0,0,1){$G$};
      \node(K) at (0,1,0){$L$};
      \node(K') at (0,1,1){$L'$};
      \node(R) at (1,1,0){$K$};
      \node(R') at (1,1,1){$K'$};
      \node(TR) at (1,0,0){$D$};
      \node(TR') at (1,0,1){$D'$};
      \path[-](TK) edge node[above, font=\footnotesize] {=}(TK');
      \path[->](K) edge node[above, font=\footnotesize] {$\rulem_1$}(K');
      \path[->](R) edge node[right, font=\footnotesize] {$\rulem_2$}(R');
      \path[<-<,dashed](TR) edge node[right, font=\footnotesize] {$d$}(TR');
      \path[<-<](K) edge node[above, font=\footnotesize] {$l$}(R);
      \path[->](K) edge node[right, near start, font=\footnotesize] {$$}(TK);
      \path[->](K') edge node[left, font=\footnotesize] {$m'$}(TK');
      \path[<-<](TK) edge node[above,near end, font=\footnotesize] {$f$}(TR);
      \path[<-<](TK') edge node[below, font=\footnotesize] {$f'$}(TR');
      \path[->](R) edge node[right, font=\footnotesize] {$k$}(TR);
      \path[-](K') edge[draw=white, line width=3pt] (R');
      \path[<-<](K') edge node[above, font=\footnotesize] {$l'$}(R');
      \path[-](R') edge[draw=white, line width=3pt] (TR');
      \path[->](R') edge node[left,near end, font=\footnotesize] {$k'$}(TR');
    \end{tikzpicture}
  \end{center}
  commutes.}

  The front and back faces are pushouts. For all item $x$ in
  $D'$, $f'(x)$ is not marked for removal by
  $m'$ and we show that is also the case by
  $m'\circ\rulem_1$. Suppose otherwise, then $f'(x)$ has a preimage
  $y$ by $m'\circ\rulem_1$ that has no preimage by
  $l$. However, $\rulem_1(y)$ has a preimage
  $y'$ by
  $l'$, and since the top face is a pullback there should be a
  common preimage of $y$ and $y'$ in
  $K$, a contradiction. Thus we let
  $d(x)$ be the unique preimage of $f'(x)$ by $f$, so that $d$ is
  unique such that $f\circ
  d=f'$. We easily see that $f\circ k = f\circ d\circ k'\circ
  \rulem_2 $ hence the right face of the cube commutes.
\end{proof}

\begin{proof}[Proof of Proposition \ref{pr-h}]
  \emph{For every $\directmorph:\directobj\rightarrow \directobj'$ in
  $\stratcat$ there exists a unique $h_{\directmorph}$ such that
  \begin{center}
    \begin{tikzpicture}[xscale=2.5,yscale=0.8]
      \node(C) at (0,0){$\interf$};
      \node(H) at (1,1){$H_{\directobj}$};
      \node(H') at (1,-1){$H_{\directobj'}$};
      \node(R) at (2,1){$\Rc_{\directobj}$};
      \node(R') at (2,-1){$\Rc_{\directobj'}$};
      \path[->] (C) edge node[above, font=\footnotesize] {$h_{\directobj}$} (H);
      \path[->] (C) edge node[below, font=\footnotesize] {$h_{\directobj'}$} (H');
      \path[->] (R) edge node[above, font=\footnotesize] {$n_{\directobj}$} (H);
      \path[->] (R') edge node[below, font=\footnotesize] {$n_{\directobj'}$} (H');
      \path[->] (R) edge node[right, font=\footnotesize] {$\projr\directmorph$} (R');
      \path[->,dashed] (H) edge node[right, font=\footnotesize] {$h_{\directmorph}$} (H');
    \end{tikzpicture}
  \end{center}
  commutes.}
  
  Since $ \intercone \circ c_{\directobj} = \cone{\directobj} =
  \cone{\directobj'}\circ \projk\directmorph = \intercone \circ
  c_{\directobj'}\circ \projk\directmorph$ then by the unicity of
  $c_{\directobj}$ the left face of the following cube commutes.
  \begin{center}
    \begin{tikzpicture}[xscale=1.9,yscale=1.6,z=-0.7cm]
      \node(C) at (0,0,0){$\interf$}; \node(C') at (0,0,1){$\interf$};
      \node(K) at (0,1,0){$\Kc_{\directobj}$};
      \node(K') at (0,1,1){$\Kc_{\directobj'}$};
      \node(R) at (1,1,0){$\Rc_{\directobj}$};
      \node(R') at (1,1,1){$\Rc_{\directobj'}$};
      \node(H) at (1,0,0){$H_{\directobj}$};
      \node(H') at (1,0,1){$H_{\directobj'}$};
      \path[-](C) edge node[above, font=\footnotesize] {=}(C');
      \path[->](K) edge node[above, font=\footnotesize] {$\projk\directmorph$}(K');
      \path[->](R) edge node[right, font=\footnotesize] {$\projr\directmorph$}(R');
      \path[->,dashed](H) edge node[right, font=\footnotesize] {$h_{\directmorph}$}(H');
      \path[->](K) edge node[above, font=\footnotesize] {$\rc_{\directobj}$}(R);
      \path[->](K) edge node[right, near start, font=\footnotesize] {$c_{\directobj}$}(C);
      \path[->](K') edge node[left, font=\footnotesize] {$c_{\directobj'}$}(C');
      \path[->](C) edge node[above,near end, font=\footnotesize] {$h_{\directobj}$}(H);
      \path[->](C') edge node[below, font=\footnotesize] {$h_{\directobj'}$}(H');
      \path[->](R) edge node[right, font=\footnotesize] {$n_{\directobj}$}(H);
      \path[-](K') edge[draw=white, line width=3pt] (R');
      \path[->](K') edge node[above, font=\footnotesize] {$\rc_{\directobj'}$}(R');
      \path[-](R') edge[draw=white, line width=3pt] (H');
      \path[->](R') edge node[left,near end, font=\footnotesize] {$n_{\directobj'}$}(H');
    \end{tikzpicture}
  \end{center}
  Since the top and front faces also commute then
  $n_{\directobj'}\circ \projr\directmorph \circ \rc_{\directobj} = h_{\directobj'}
  \circ c_{\directobj}$, and since the back face is a pushout we get
  the result. 
\end{proof}

\begin{proof}[Proof of Proposition \ref{pr-subsume}]
  \emph{If $\stratcat'$ is restricted to $\directobj'$ and
    $\directobj'$ is terminal in $\stratcat$ then $\stratcat$ and
    $\stratcat'$ are globally coherent and
    $H_{\stratcat} \simeq H_{\stratcat'}$.}

  For any $\directobj\in\stratcat$ let $\tm{\directobj}$ be the unique morphism
  $\tm{\directobj}:\directobj\rightarrow \directobj'$. Since   $\tuple{\projd{\tm{\directobj}},\, \projk{\tm{\directobj}},\,
    \projr{\tm{\directobj}}}: \partialF\insertGF\insertSF\directobj
  \rightarrow \partialF\insertGF\insertSF\directobj'$ is a morphism in
  $\partialcat$, then $\fc_{\directobj}\circ \projd{\tm{\directobj}} =
  \fc_{\directobj'}$ and hence $\projd{\tm{\directobj}}:
  \fc_{\directobj'}\rightarrow \fc_{\directobj}$ is a morphism in $\slice{G}$.

  Since $\directobj'$ is initial in $\dual{\stratcat}$ there is a
  unique cone $\intercone$ from
  $\ProjG\directobj' = \fc_{\directobj'}$ to $\ProjG$ (defined by
  $\intercone\directobj = \projd{\tm{\directobj}}$ for all
  $\directobj\in\stratcat$) and any cone $\gamma$ from any
  $f\in\slice{G}$ to $\ProjG$ can be written
  $\gamma = \intercone \circ \gamma\directobj'$, hence $\intercone$ is
  a limit cone of $\ProjG$ (see \cite[Exercise III.4.3]{MacLane}), so that
  $f_{\stratcat} \simeq \fc_{\directobj'}$ and
  $\interf \simeq \Dc_{\directobj'}$.

  Let
  $\cone{\directobj}=\intercone\circ \kc_{\directobj'} \circ
  \projk{\tm{\directobj}}$ (where
  $\projk{\tm{\directobj}} : \fc_{\directobj}\circ \kc_{\directobj}
  \rightarrow \fc_{\directobj'}\circ \kc_{\directobj'}$ and
  $\kc_{\directobj'}: \fc_{\directobj'}\circ \kc_{\directobj'}
  \rightarrow \fc_{\directobj'}$ are morphisms in $\slice{G}$ as
  above), this is a cone from
  $\fc_{\directobj}\circ \kc_{\directobj} $ to $\ProjG$ such that
  $\cone{\directobj}\directobj = \projd{\tm{\directobj}} \circ
  \kc_{\directobj'} \circ \projk{\tm{\directobj}} =
  \kc_{\directobj}$. Besides, for every
  $\directmorph:\directobj_1\rightarrow \directobj_2$ we have
  $\cone{\directobj_1} = \cone{\directobj_2}\circ
  \projk{\directmorph}$ since
  $\tm{\directobj_2}\circ \directmorph = \tm{\directobj_1}$. Hence
  $(\cone{\directobj})_{\directobj\in\stratcat}$ is a coherent system
  of cones for $\stratcat$, which is therefore globally coherent.

  Since $\directobj'$ is terminal in $\stratcat$ there is as above a
  colimit cone from $\ProjH$ to
  $\ProjH\directobj' = h_{\directobj'}: \interf\rightarrow
  H_{\directobj'}$, hence $H_{\stratcat}\simeq H_{\directobj'}$ (the
  pushout of $\rc_{\directobj'}$ and
  $c_{\directobj'}=\kc_{\directobj'}\circ \projk{\tm{\directobj'}} =
  \kc_{\directobj'}$). We finally note that $\directobj'$ is 
  terminal in $\stratcat'$.
\end{proof}

\begin{proof}[Proof of Lemma \ref{lm-IGF-full}]
\emph{$\insertGF$ is a full and right-full embedding.}

  The functor $G:\termcat\rightarrow \discr{\cat}$ is a full embedding hence so is
  $\insertGF$. For all $\directobj'\in \directcatG$,
  $\directobj\in\directcat$ and
  $\directmorph: \directobj \rightarrow \insertGF\directobj'$ we have
  $\inputF\partialF \directobj = \insertGF\inputF\partialF
  \directobj'=G$ hence $\inputF\partialF\directmorph = \id{G}$. Since
  $G$ and $\id{G}$ also have preimages by functor $G$ there must be
  preimages $\directobj'_1\in\directcatG$ and
  $\directmorph_1:\directobj'_1\rightarrow \directobj'$ in
  $\directcatG$ such that $\insertGF\directmorph_1 =\directmorph$,
  hence $\insertGF$ is right-full.
\end{proof}

\begin{proof}[Proof of Proposition \ref{pr-right-full}]
\emph{If $\ruleF$ is right-full (resp. faithful) then so is $\ruleFSG$.}

  For all $\directobj'\in\directcatSG$, $\arule\in\rulesys$ and
  $\rulem:\arule\rightarrow \arule'$, where
  $\arule'=\ruleFSG\directobj'$, we have
  $\insertF\arule' = \ruleF\insertGF\insertSF\directobj'$ and
  $\insertF\rulem:\insertF\arule\rightarrow \insertF\arule'$ in
  $\rulecat$, and since by Lemma~\ref{lm-IGF-full}
  $\ruleF\circ \insertGF$ is right-full then there exists
  $\directobj'_1\in\directcatG$ and
  $\directmorph_1:\directobj'_1\rightarrow \insertSF\directobj'$ such
  that $\ruleF\insertGF\directmorph_1 = \insertF\rulem$. Thus
  $\insertF\arule$ and $\insertF\rulem$ have preimages by $\insertF$
  and $\ruleF\circ \insertGF$, hence they must have preimages
  $\directobj\in\directcatSG$ and
  $\directmorph:\directobj\rightarrow \directobj'$ such that
  $\insertGF\directmorph = \directmorph_1$ and
  $\ruleFSG\directmorph=\rulem$.

  If $\ruleF$ is faithful, since $\insertGF$ is faithful then so is
  $\ruleF\circ \insertGF$, and hence so is $\ruleFSG$. 
\end{proof}

\begin{proof}[Proof of Proposition \ref{pr-adhesive}]
  \emph{If $\cat$ is adhesive, $\directobj,\directobj'\in\directcatDPO$ and
  $\rulem: \DPORF\directobj \rightarrow \DPORF\directobj'$ such that
  $m=m' \circ \rulem_1$ then there exists a unique
  $\directmorph:\directobj\rightarrow \directobj'$ such that
  $\DPORF\directmorph = \rulem$.}
  
  Let $G=\inputF\partialF \directobj= \inputF\partialF \directobj'$, we
  consider the following diagram
  \begin{center}
  \begin{tikzpicture}[xscale=1.9,yscale=1.6,z=-0.7cm]
    \node(G) at (0,0,0){$G$}; \node(G') at (0,0,1){$G$};
    \node(L) at (0,1,0){$L$};
    \node(L') at (0,1,1){$L'$};
    \node(K) at (1,1,0){$K$};
    \node(K') at (1,1,1){$K'$};
    \node(D) at (1,0,0){$D$};
    \node(D') at (1,0,1){$D'$};
    \node(P) at (2,0,1){$P$};
    \path[-](G) edge node[above, font=\footnotesize] {=}(G');
    \path[->](L) edge node[above, font=\footnotesize] {$\rulem_1$}(L');
    \path[->](K) edge node[right, font=\footnotesize] {$\rulem_2$}(K');
    \path[<-<](L) edge node[above, font=\footnotesize] {$l$}(K);
    \path[->](L) edge node[right, near start, font=\footnotesize] {$m$}(G);
    \path[->](L') edge node[left, font=\footnotesize] {$m'$}(G');
    \path[<-<](G) edge node[above,near end, font=\footnotesize] {$f$}(D);
    \path[<-<](G') edge node[below, font=\footnotesize] {$f'$}(D');
    \path[->](K) edge node[right, font=\footnotesize] {$k$}(D);
    \path[-](L') edge[draw=white, line width=3pt] (K');
    \path[<-<](L') edge node[above, font=\footnotesize] {$l'$}(K');
    \path[-](K') edge[draw=white, line width=3pt] (D');
    \path[->](K') edge node[left,near end, font=\footnotesize] {$k'$}(D');
    \path[>->](P) edge node[below, font=\footnotesize] {$x$}(D');
    \path[>->](P) edge node[left, font=\footnotesize] {$y$}(D);
    \path[->,dashed, bend left](K) edge node[right, font=\footnotesize] {$z$}(P);
  \end{tikzpicture}
\end{center}
where the bottom face is a pullback. By \cite[Lemma 4.2]{LackS05}
monics are stable under pushouts hence $f$ and
$f'$ are monics and therefore also $x$ and $y$. By the
commuting properties we have
$f\circ k= f'\circ k'\circ\rulem_2$, hence there exists a unique $z$ such
that $y\circ z = k$ and
$x\circ z = k'\circ \rulem_2$.

The front face is a pushout along the monic $l$, hence it
is a pullback \cite[Lemma 4.3]{LackS05}, as is the top face, hence by
composition the square formed by $l$, $m$,
$f'$, $k'\circ\rulem_2$ is also a pullback.

The back face is a pushout along the monic $l$, hence it
is a VK-square and bottom face of the
commuting cube below
  \begin{center}
    \begin{tikzpicture}[xscale=2.1,yscale=1.6,z=-0.7cm]
      \node(Kb) at (0,0,0){$K$};
      \node(L) at (0,0,1){$L$};
      \node(K) at (0,1,0){$K$};
      \node(K') at (0,1,1){$K$};
      \node(P) at (1,1,0){$P$};
      \node(D') at (1,1,1){$D'$};
      \node(D) at (1,0,0){$D$};
      \node(G) at (1,0,1){$G$};
      \path[>->](Kb) edge node[above, font=\footnotesize] {$l$}(L);
      \path[->](K) edge node[above, font=\footnotesize] {$\id{}$}(K');
      \path[>->](P) edge node[right, font=\footnotesize] {$x$}(D');
      \path[>->](D) edge node[right, font=\footnotesize] {$f$}(G);
      \path[->](K) edge node[above, font=\footnotesize] {$z$}(P);
      \path[->](K) edge node[right, near start, font=\footnotesize] {$\id{}$}(Kb);
      \path[>->](K') edge node[left, font=\footnotesize] {$l$}(L);
      \path[->](Kb) edge node[above,near end, font=\footnotesize] {$k$}(D);
      \path[->](L) edge node[below, font=\footnotesize] {$m$}(G);
      \path[>->](P) edge node[right, font=\footnotesize] {$y$}(D);
      \path[-](K') edge[draw=white, line width=3pt] (D');
      \path[->](K') edge node[above, font=\footnotesize] {$k'\circ \rulem_2$}(D');
      \path[-](D') edge[draw=white, line width=3pt] (G);
      \path[>->](D') edge node[left,near end, font=\footnotesize] {$f'$}(G);
    \end{tikzpicture}
  \end{center}
  Its front and right faces are pullbacks. Since $l$ is
  monic then its left face is a pullback, and since $y$ is monic its
  back face is also a pullback. Hence its top face is a pushout, and
  since isomorphisms are preserved by pushouts, $x$ is an
  isomorphism.

  Let $d=y \circ \invf{x}$, we see that $f \circ d = f'$ and
  $d\circ k' \circ \rulem_2 = y \circ z = k$, so that
  $\directmorph=\tuple{\rulem_1,\rulem_2,\rulem_3,d}$ is a morphism
  from $\directobj$ to $\directobj'$ in $\directcatDPO$ such that
  $\DPORF\directmorph = \rulem$. Its unicity is obvious.
\end{proof}

\begin{proof}[Proof of Proposition \ref{pr-SqPO}]
  \emph{For every direct SqPO-transformations $\directobj$, $\directobj'$
  with corresponding SqPO-rules $\arule$, $\arule'$, every
  $\rulem:\arule\rightarrow\arule'$ in $\rulecatSqPO$ such that
  $m = m'\circ \rulem_1$, there exists a
  unique $d$ such that
\begin{center}
  \begin{tikzpicture}[xscale=1.9,yscale=1.6,z=-0.7cm]
    \node(G) at (0,0,0){$G$}; \node(G') at (0,0,1){$G'$};
    \node(L) at (0,1,0){$L$};
    \node(L') at (0,1,1){$L'$};
    \node(K) at (1,1,0){$K$};
    \node(K') at (1,1,1){$K'$};
    \node(D) at (1,0,0){$D$};
    \node(D') at (1,0,1){$D'$};
    \path[-](G) edge node[above, font=\footnotesize] {=}(G');
    \path[->](L) edge node[above, font=\footnotesize] {$\rulem_1$}(L');
    \path[->](K) edge node[right, font=\footnotesize] {$\rulem_2$}(K');
    \path[<-,dashed](D) edge node[right, font=\footnotesize] {$d$}(D');
    \path[<-](L) edge node[above, font=\footnotesize] {$l$}(K);
    \path[->](L) edge node[right, near start, font=\footnotesize] {$m$}(G);
    \path[->](L') edge node[left, font=\footnotesize] {$m'$}(G');
    \path[<-](G) edge node[above,near end, font=\footnotesize] {$f$}(D);
    \path[<-](G') edge node[below, font=\footnotesize] {$f'$}(D');
    \path[->](K) edge node[right, font=\footnotesize] {$k$}(D);
    \path[-](L') edge[draw=white, line width=3pt] (K');
    \path[<-](L') edge node[above, font=\footnotesize] {$l'$}(K');
    \path[-](K') edge[draw=white, line width=3pt] (D');
    \path[->](K') edge node[left,near end, font=\footnotesize] {$k'$}(D');
  \end{tikzpicture}
\end{center}
commutes.}

  By composition of pullbacks
  $\tuple{k'\circ \rulem_2,l}$ is a pullback of
  $\tuple{f', m}$, and since
  $\tuple{f,k}$ is a final pullback complement of
  $\tuple{m,l}$ then there is a unique
  $d:D' \rightarrow D$ such that
  $f'=f\circ d$ and
  $k = d\circ k'\circ \rulem_2$.
\end{proof}

\begin{proof}[Proof of Proposition \ref{pr-SqPO-lf}]
\emph{In the category of graphs $\SqPOmRF$ is right-full.}

For all $\directobj'\in \directcatSqPOm$ and
$\rulem:\arule\rightarrow \SqPOmRF \directobj'$ in $\rulecatSqPOm$,
the matching $m'\circ\rulem_1:L\rightarrow G$ is monic hence by
\cite[Construction 6]{CorradiniHHK06} $\tuple{m'\circ\rulem_1,l}$ has
a final pullback complement, hence there is a
$\directobj\in \directcatSqPOm$ with $m = m'\circ\rulem_1$ and
$\SqPOmRF\directobj = \arule$, and by Proposition~\ref{pr-SqPO} there
is a (unique) $\directmorph:\directobj \rightarrow \directobj'$ in
$\directcatSqPOm$ such that $\SqPOmRF\directmorph = \rulem$.
\end{proof}

\begin{proof}[Proof of Proposition \ref{pr-pbpo}]
\emph{For every direct PBPO-transformations $\directobj$, $\directobj'$
  with corresponding PBPO-rules $\arule$, $\arule'$, every
  $\rulem:\arule\rightarrow\arule'$ in $\pbporulecat$ such that
  $m = m'\circ \rulem_1$ and
  $t_G=\rulem_4\circ t_{G'}$, there exists a
  unique $d$ such that
  \begin{center}
    \begin{tikzpicture}[xscale=1.9,yscale=1.7,z=-0.7cm]
      \node (L) at (1,1,0) {$L$};
      \node (K) at (2,1,0) {$K$};  
      \node (G) at (1,0,0) {$G$};
      \node (D) at (2,0,0) {$D$};  
      \node (TL) at (1,-1,0) {$T_L$};
      \node (TK) at (2,-1,0) {$T_K$};  
      \path[->] (K) edge node[above, font=\footnotesize] {$l$} (L);
      \path[->] (TK) edge node[below, font=\footnotesize] {$u$} (TL);
      \path[->] (D) edge node[above,near start, font=\footnotesize] {$f$} (G);
      \path[->] (L) edge node[near start,right, font=\footnotesize] {$m$} (G);
      \path[->] (K) edge node[right, font=\footnotesize] {$k$} (D);
      \path[->] (G) edge node[left, font=\footnotesize] {$t_G$} (TL);
      \path[->] (D) edge node[right, font=\footnotesize] {$t_D$} (TK);
      \node (L1) at (1,1,1) {$L'$};
      \node (K1) at (2,1,1) {$K'$};
      \node (G1) at (1,0,1) {$G'$};
      \node (D1) at (2,0,1) {$D'$};
      \node (TL1) at (1,-1,1) {$T_{L'}$};
      \node (TK1) at (2,-1,1) {$T_{K'}$};
      \path[-](D1) edge[draw=white, line width=3pt] (G1);
      \path[->] (D1) edge node[below,near end, font=\footnotesize] {$f'$} (G1);
      \path[-](K1) edge[draw=white, line width=3pt] (L1);
      \path[->] (K1) edge node[above, font=\footnotesize] {$l'$} (L1);
      \path[->] (TK1) edge node[below, font=\footnotesize] {$u'$} (TL1);
      \path[->] (L1) edge node[left, font=\footnotesize] {$m'$} (G1); 
      \path[->] (G1) edge node[left, font=\footnotesize] {$t_{G'}$} (TL1); 
      \path[-](K1) edge[draw=white, line width=3pt] (D1); 
      \path[->] (K1) edge node[right, font=\footnotesize] {$k'$} (D1); 
      \path[-](D1) edge[draw=white, line width=3pt] (TK1); 
      \path[->] (D1) edge node[near end,left, font=\footnotesize] {$t_{D'}$} (TK1); 
      \path[->] (L) edge node[left, font=\footnotesize] {$\rulem_1$} (L1);
      \path[->] (K) edge node[left, font=\footnotesize] {$\rulem_2$} (K1);
      \path[->] (TL1) edge node[left, font=\footnotesize] {$\rulem_4$} (TL);
      \path[->] (TK1) edge node[right, font=\footnotesize] {$\rulem_5$} (TK);
      \path[-] (G) edge node[left, font=\footnotesize] {$=$} (G1);
      \path[->,dashed] (D1) edge node[right, font=\footnotesize] {$d$} (D);
    \end{tikzpicture}
  \end{center}
  commutes. }

  By hypothesis the two front, back, left faces commute, as well as
  the top and bottom faces. Thus \[u\circ \rulem_5\circ
    t_{D'} = \rulem_4\circ u'\circ 
    t_{D'} = \rulem_4\circ t_{G'} \circ f'
    = t_G \circ f',\] and since $D$
  is a pullback then there exists a unique $d$ such that the right and
  top face of the bottom cube commute. This also means that
  $(D,f,t_D)$ is a mono-source, and
  since
  \[\left\{
      \begin{array}{l}
        f\circ d\circ k'\circ\rulem_2 = f' \circ k'\circ\rulem_2 = m'
        \circ l'\circ\rulem_2 = m' \circ\rulem_1\circ l = m \circ l = f
        \circ k \\
        t_D \circ d\circ k' \circ\rulem_2 = \rulem_5\circ t_{D'}\circ
        k'\circ\rulem_2 = \rulem_5\circ t_{K'}\circ\rulem_2 = t_K =
        t_D\circ k 
      \end{array}\right.\]
  then $d\circ k'\circ\rulem_2 = k$.
\end{proof}

\end{document}